\newtheorem{thm}{Theorem}
\newtheorem{conjecture}{Conjecture}
\pgfplotsset{grid style={gridColor,line width=0pt}}
\newcommand\f{\frac}
\def\be#1{\begin{equation}\label{#1}}
\def\ee{\end{equation}}
\title{A mathematical model of p62-ubiquitin aggregates in autophagy}
\author[1]{Julia Delacour}
\author[1]{Marie Doumic}
\author[2]{Sascha Martens}
\author[3]{Christian Schmeiser}
\author[4]{Gabriele Zaffagnini} 
\affil[1]{Sorbonne Universit\'e, Inria, Universit\'e Paris-Diderot, CNRS, Laboratoire Jacques-Louis Lions, 75005 Paris, France}
\affil[2]{University of Vienna, Max F. Perutz Laboratories Vienna Biocenter (VBC), Dr. Bohr-Gasse 9, 1030 Vienna, Austria}
\affil[3]{Faculty of Mathematics, University of Vienna, Oskar-Morgenstern-Platz 1,1090 Vienna, Austria}
\affil[4]{Centre for Genomic Regulation (CRG), Dr.  Aiguader 88, 08003 Barcelona, Spain}
\begin{document}

\maketitle
\begin{abstract}
Aggregation of ubiquitinated cargo by oligomers of the protein p62 is an important preparatory step in cellular autophagy.
In this work a mathematical model for the dynamics of these heterogeneous aggregates in the form of a system of ordinary 
differential equations is derived and analyzed. Three different parameter regimes are identified, where either aggregates are
unstable, or their size saturates at a finite value, or their size grows indefinitely as long as free particles are abundant. 
The boundaries of these regimes as well as the finite size in the second case can be computed explicitly. The growth in the
third case (quadratic in time) can also be made explicit by formal asymptotic methods. The qualitative results are illustrated
by numerical simulations. A comparison with recent experimental results permits a partial parametrization of the 
model.
\end{abstract}

\paragraph{Acknowledgments:} This work has been supported by the PhD program {\em Signalling Mechanisms in Cellular
Autophagy,} funded by the Austrian Science Fund (FWF), project no. W1261. CS also acknowledges support by FWF,
grant nos. W1245 and SFB65. MD and JD have been partially supported by the ERC Starting Grant SKIPPERAD (number 306321). MD thanks the Wolfgang Pauli Institute for the sabbatical stay in Vienna during which this work has been initiated.

\section{Introduction}
Autophagy is an intracellular pathway, which targets damaged, surplus, and harmful cytoplasmic material for degradation. This is mediated by the sequestration of cytoplasmic cargo material within double membrane vesicles termed autophagosomes, which subsequently fuse with lysosomes wherein the cargo is hydrolyzed.  Defects in autophagy result in various diseases including neurodegeneration, cancer, and uncontrolled infections \cite{cite1}.
The selectivity of autophagic processes is mediated by cargo receptors such as p62 (also known as SQSTM1), which link the cargo material to the nascent autophagosomal membrane \cite{Danielijcs214304}.  p62 is an oligomeric protein and mediates the selective degradation of ubiquitinated proteins. Its interaction with ubiquitin is mediated by its C-terminal UBA domain, while it attaches the cargo to the autophagosomal membrane due to its interaction with Atg8 family proteins such as LC3B, which decorate the membrane \cite{Pankiv17082007}. Additionally, p62 serves to condensate ubiquitinated proteins into larger condensates or aggregates, which subsequently become targets for autophagy \cite{cite2,martens2}. It has been reported that this condensation reaction requires the ability of p62 to oligomerize and the presence of two or more ubiquitin chains on the substrates \cite{10.7554/eLife.08941,martens2}.

In this article a mathematical model for the condensation process is derived and analyzed. It is based on cross-linking of
p62 oligomers by ubiquitinated substrate \cite{martens2}. A cross-linker is assumed to be able to connect two oligomers,
where each oligomer has a number of binding sites corresponding to its size. As an approximation for the dynamics of large aggregates, a nonlinear system of ordinary differential equations is derived.

The oligomerization property of p62 has been shown to be necessary in the formation of aggregates~\cite{martens2}: too small oligomers of Ubiquitin do not form aggregates~\cite{10.7554/eLife.08941}.

The dynamics of protein aggregation has been studied by mathematical modelling for several decades, but most models consider the aggregation of only one type of protein, which gives rise to models belonging to the class of nucleation-coagulation-fragmentation equations, see e.g.~\cite{Bishop,10.1371/journal.pone.0043273,Radford} for examples in the biophysical literature, and~\cite{Thierry,LM4,Banasiak,DubovStew} for a sample of the mathematical literature. 
Contrary to these studies, the present work considers aggregates composed of two different types of particles with varying
mixing ratios, which drastically increases the complexity of the problem. 

In the following section the mathematical model is derived. It describes an aggregate by three numbers: the number of
p62 oligomers, the number of cross-linkers bound to one oligomer, and the number of cross-linkers bound to two oligomers.
The model considers an early stage of the aggregation process where the supply of free p62 oligomers and of free cross-linkers is not limiting.
Since no other information about the composition of the aggregate is used, assumptions on the binding and unbinding rates
are necessary. In the limit of large aggregates, whose details are presented in an appendix, the model takes the form of a
system of three ordinary differential equations. Section 3 starts with a result on the well posedness of the model, and it is 
mainly devoted to a study of the long-time behaviour by a combination of analytical and numerical methods. Depending on the 
parameter values, three different regimes are identified, where either aggregates are unstable and completely dissolved, or
their size tends to a limiting value, or they keep growing (as long as they do not run out of free oligomers and cross-linkers). 
In Section 4 we discuss the parametrization of the model and a comparison with data from \cite{martens2}. 

\section{Presentation of the model}\label{sec:model}

\paragraph{Discrete description of aggregates:}
We consider two types of basic particles:
\begin{enumerate}
\item {\em Oligomers} of the protein p62, where we assume for simplicity that all oligomers contain the same number $n\ge 3$ 
of molecules. These oligomers are denoted by p62$_n$ and are assumed to possess $n$ binding sites for ubiquitin each,
\item {\em Cross-linkers} in the form of ubiquitinated cargo, denoted by $Ubi$ and assumed to have two ubiquitin ends each. When one end of a $Ubi$ is bound to a p62$_n$, we call it {\em one-hand bound,} when both ends are bound we call it {\em both-hand bound.}
\end{enumerate}

An aggregate is represented by a triplet $(i,j,k)\in \mathbb{N}_0^3$, where $i$ denotes the number of one-hand bound $Ubi$,
$j$ denotes the number of both-hand bound $Ubi$, and $k$ denotes the number of p62$_n$. It is a rather drastic step to 
describe an aggregate only by these three numbers, since the same triplet might represent aggregates with various forms.
This will affect our modelling below. 

An aggregate will be assumed to contain at least two p62$_n$, i.e. $k\ge 2$, and enough both-hand bound $Ubi$ to be 
connected, i.e. $j\ge k-1$. Furthermore, an aggregate contains $nk$ binding sites for $Ubi$, implying 
$i+2j \le nk$. A triplet $(i,j,k)\in \mathbb{N}_0^3$ satisfying the three inequalities 
\begin{equation}\label{discr-cond}
    k\ge 2 \,,\qquad j\ge k-1 \,,\qquad i+2j \le nk \,,
\end{equation}
will be called {\em admissible.} 
An example of an admissible triplet describing a unique aggregate shape is $(0,k-1,k)$, representing a chain of p62$_n$.
Adding one both-hand bound $Ubi$ already creates a shape ambiguity: The triplet $(0,k,k)$ can be realized by a circular
aggregate or by an open chain, where one connection is doubled.

\paragraph{The reaction scheme:} Basically there are only two types of reactions: binding and unbinding of $Ubi$ to 
p62$_n$. However, depending on the situation these may have various effects on the aggregate, whence we distinguish
between three binding and three unbinding reactions.
\begin{enumerate}
\item {\bf Addition of a free $Ubi$}, requiring at least one free binding site, i.e. $nk-i-2j\ge 1$, (see Fig.~\ref{fig:1}):
\begin{align*}
Ubi+ (i,j,k) \xrightarrow{\kappa_1'}(i+1,j,k)
\end{align*}
The reaction rate (number of reactions per time) is modeled by mass action kinetics for a second-order reaction with 
reaction constant $\kappa_1'$ and with the number $[Ubi]$ of free $Ubi$. Since free $Ubi$ and free p62 oligomers 
will be assumed abundant, their numbers $[Ubi]$ and $[{\rm p62}_n]$ will be kept fixed and the abbreviation 
$\kappa_1 = \kappa_1' [Ubi]$ will be used. This leads to a first-order reaction rate
\be{def:r1}
  r_1 = \kappa_1 (nk-i-2j) \,. 
\ee
\item {\bf Addition of a free} p62$_n$, requiring at least one one-hand bound $Ubi$, i.e. $i\ge 1$:
\begin{align*}
{\rm p62}_n+ (i,j,k) \xrightarrow{\kappa_2'}(i-1,j+1,k+1)
\end{align*}
Analogously to above, we set $\kappa_2 = \kappa_2' [{\rm p62}_n]$ and
\be{def:r2}
  r_2 = \kappa_2 i \,.
\ee

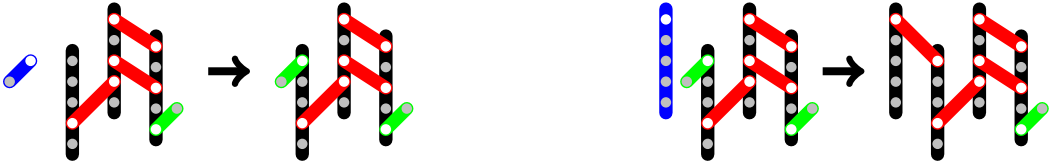
\begin{figure}[h!]
\begin{tikzpicture}[scale=0.275]
\draw[black, line width=5pt, cap=round]  (0,3)-- (0,-2);
\draw[line width=5pt, cap=round] (2,5)-- (2,0);
\draw[line width=5pt, cap=round] (4,3.7)-- (4,-1.3);

\draw[red, line width=5pt, cap=round] (4,3.2)-- (2,4.5);
\draw[red, line width=5pt, cap=round] (4,1.2)-- (2,2.5);
\draw[red, line width=5pt, cap=round] (0,-0.5)-- (2,1.5);
\draw[green, line width=5pt, cap=round] (4,-0.8)-- (5,0.2);
\draw[blue, line width=5pt, cap=round] (-3,1.5)-- (-2,2.5);

\fill[gray!50] (-3,1.5) circle (0.25);
\fill[white] (-2,2.5) circle (0.25);
\fill[gray!50] (5,0.2) circle (0.25);
\draw[->,line width=3pt] (6.5,2)--(8.5,2);
\fill[gray!50] (0,2.5) circle (0.25);
\fill[gray!50] (0,1.5) circle (0.25);
\fill[gray!50] (0,0.5) circle (0.25);
\fill[white] (0,-0.5) circle (0.25);
\fill[gray!50] (0,-1.5) circle (0.25);

\fill[white] (2,4.5) circle (0.25);
\fill[gray!50] (2,3.5) circle (0.25);
\fill[white] (2,2.5) circle (0.25);
\fill[white] (2,1.5) circle (0.25);
\fill[gray!50] (2,0.5) circle (0.25);

\fill[white] (4,3.2) circle (0.25);
\fill[gray!50] (4,2.2) circle (0.25);
\fill[white] (4,1.2) circle (0.25);
\fill[gray!50] (4,0.2) circle (0.25);
\fill[white] (4,-0.8) circle (0.25);

\draw[black, line width=5pt, cap=round]  (11,3)-- (11,-2);
\draw[black, line width=5pt, cap=round] (13,5)-- (13,0);
\draw[black, line width=5pt, cap=round] (15,3.7)-- (15,-1.3);
\draw[red, line width=5pt, cap=round] (11,-0.5)-- (13,1.5);
\draw[red, line width=5pt, cap=round] (15,3.2)-- (13,4.5);
\draw[red, line width=5pt, cap=round] (15,1.2)-- (13,2.5);
\draw[green, line width=5pt, cap=round] (15,-0.8)-- (16,0.2);
\draw[green, line width=5pt, cap=round] (11,2.5)-- (10,1.5);
\fill[white] (11,2.5) circle (0.25);
\fill[gray!50] (11,1.5) circle (0.25);
\fill[gray!50] (11,0.5) circle (0.25);
\fill[white] (11,-0.5) circle (0.25);
\fill[gray!50] (11,-1.5) circle (0.25);

\fill[white] (13,4.5) circle (0.25);
\fill[gray!50] (13,3.5) circle (0.25);
\fill[white] (13,2.5) circle (0.25);
\fill[white] (13,1.5) circle (0.25);
\fill[gray!50] (13,0.5) circle (0.25);

\fill[white] (15, 3.2) circle (0.25);
\fill[gray!50] (15,2.2) circle (0.25);
\fill[white] (15,1.2) circle (0.25);
\fill[gray!50] (15,0.2) circle (0.25);
\fill[white] (15,-0.8) circle (0.25);

\fill[gray!50] (16,0.2) circle (0.25);
\fill[gray!50] (10,1.5) circle (0.25);
\end{tikzpicture}
\hspace{3 cm}
\begin{tikzpicture}[scale=0.275]
\draw[blue, line width=5pt, cap=round]  (0,5)-- (0,0);
\draw[black, line width=5pt, cap=round]  (2,3)-- (2,-2);
\draw[line width=5pt, cap=round] (4,5)-- (4,0);
\draw[line width=5pt, cap=round] (6,3.7)-- (6,-1.3);

\draw[red, line width=5pt, cap=round] (6,3.2)-- (4,4.5);
\draw[red, line width=5pt, cap=round] (6,1.2)-- (4,2.5);
\draw[red, line width=5pt, cap=round] (2,-0.5)-- (4,1.5);
\draw[green, line width=5pt, cap=round] (6,-0.8)-- (7,0.2);
\draw[green, line width=5pt, cap=round] (2,2.5)-- (1,1.5);

\fill[white] (0,4.5) circle (0.25);
\fill[gray!50] (0,3.5) circle (0.25);
\fill[gray!50] (0,2.5) circle (0.25);
\fill[gray!50] (0,1.5) circle (0.25);
\fill[gray!50] (0,0.5) circle (0.25);

\fill[white] (2,2.5) circle (0.25);
\fill[gray!50] (1,1.5) circle (0.25);
\fill[gray!50] (2,1.5) circle (0.25);
\fill[gray!50] (2,0.5) circle (0.25);
\fill[white] (2,-0.5) circle (0.25);
\fill[gray!50] (2,-1.5) circle (0.25);

\fill[white] (4,4.5) circle (0.25);
\fill[gray!50] (4,3.5) circle (0.25);
\fill[white] (4,2.5) circle (0.25);
\fill[white] (4,1.5) circle (0.25);
\fill[gray!50] (4,0.5) circle (0.25);

\fill[white] (6,3.2) circle (0.25);
\fill[gray!50] (6,2.2) circle (0.25);
\fill[white] (6,1.2) circle (0.25);
\fill[gray!50] (6,0.2) circle (0.25);
\fill[white] (6,-0.8) circle (0.25);
\fill[gray!50] (7,0.2) circle (0.25);

\draw[->,line width=3pt] (7.5,2)--(9.5,2);
\draw[black, line width=5pt, cap=round]  (11,5)-- (11,0);
\draw[black, line width=5pt, cap=round]  (13,3)-- (13,-2);
\draw[line width=5pt, cap=round] (15,5)-- (15,0);
\draw[line width=5pt, cap=round] (17,3.7)-- (17,-1.3);

\draw[red, line width=5pt, cap=round] (17,3.2)-- (15,4.5);
\draw[red, line width=5pt, cap=round] (17,1.2)-- (15,2.5);
\draw[red, line width=5pt, cap=round] (13,-0.5)-- (15,1.5);
\draw[red, line width=5pt, cap=round] (13,2.5)-- (11,4.5);
\draw[green, line width=5pt, cap=round] (17,-0.8)-- (18,0.2);

\fill[white] (11,4.5) circle (0.25);
\fill[gray!50] (11,3.5) circle (0.25);
\fill[gray!50] (11,2.5) circle (0.25);
\fill[gray!50] (11,1.5) circle (0.25);
\fill[gray!50] (11,0.5) circle (0.25);

\fill[white] (13,2.5) circle (0.25);
\fill[gray!50] (13,1.5) circle (0.25);
\fill[gray!50] (13,0.5) circle (0.25);
\fill[white] (13,-0.5) circle (0.25);
\fill[gray!50] (13,-1.5) circle (0.25);

\fill[white] (15, 4.5) circle (0.25);
\fill[gray!50] (15,3.5) circle (0.25);
\fill[white] (15,2.5) circle (0.25);
\fill[white] (15,1.5) circle (0.25);
\fill[gray!50] (15,0.5) circle (0.25);

\fill[white] (17,3.2) circle (0.25);
\fill[gray!50] (17,2.2) circle (0.25);
\fill[white] (17,1.2) circle (0.25);
\fill[gray!50] (17,0.2) circle (0.25);
\fill[white] (17,-0.8) circle (0.25);
\fill[gray!50] (18,0.2) circle (0.25);
\end{tikzpicture}
\label{fig:1}
\caption{Examples for Reactions 1 (left) and 2 (right) with p62$_5$ in black, one-hand bound $Ubi$ in green, two-hand
bound $Ubi$ in red, free particles in blue. Reaction 1: $Ubi + (1,3,3) \to (2,3,3)$. Reaction 2: p62$_5 + (2,3,3) \to (1,4,4)$.}
\end{figure}

\item {\bf Compactification of the aggregate} by a $Ubi$ binding its second hand, requiring at least one one-hand bound $Ubi$,
i.e. $i\ge 1$, and at least one free binding site, i.e. $nk-i-2j\ge 1$:
\begin{align*}
(i,j,k) \xrightarrow{\kappa_3'} (i-1,j+1,k)
\end{align*}
This is a second-order reaction with rate
\be{def:r3}
  r_3 = \kappa_3' i(nk-i-2j) \,.
\ee
\item {\bf Loss of a} $Ubi$, requiring at least one one-handed $Ubi$, i.e. $i\ge 1$. This is the reverse reaction to 1:
\begin{align*}
(i,j,k) &\xrightarrow{\kappa_{-1}} Ubi + (i-1,j,k) 
\end{align*}
Its rate is modeled by
\be{def:r-1}
  r_{-1} = \kappa_{-1} i \,.
\ee
\item {\bf Loss of a} p62$_n$ (leading to loss of the whole aggregate if $k=2$):
\begin{align*}
(i,j,k) &\xrightarrow{\kappa_- \alpha_{j,k}} {\rm p62}_n + \ell\,Ubi + (i+1-\ell,j-1,k-1) 
\end{align*}
This and the following reaction need some comments. They are actually both the same reaction, namely breaking
of a cross-link, which we assume to occur with rate $\kappa_- j$. However, this can have different consequences. 
Here we consider something close to the reverse of reaction 2. This means we assume that the broken cross-link has 
been the only connection of a p62 oligomer with the aggregate, such that the oligomer falls off. This requires the
condition $nk-2j\ge n-1$, meaning the possibility that the other $n-1$ binding sites of the lost oligomer are free of
two-hand bound $Ubi$. It is not quite the reverse 
of reaction 2, since we have to consider the possibility that $\ell$ one-hand bound $Ubi$, $0\le \ell \le n-1$, are bound to 
the lost oligomer. The conditional probability $\alpha_{j,k}$ to be in this case, 
when a cross-link breaks, is {\em zero} for a very tightly connected aggregate where each oligomer is cross-linked at least
twice, i.e. $nk-2j \le n-2$, and it is {\em one} for a very
loose aggregate, i.e. a chain with $j=k-1$. This leads to the model
\be{def:alpha-discr}
   \alpha_{j,k} = \frac{(nk-2j-n+2)_+}{(n-2)k+4-n} \,,
\ee
and to the rate
\be{def:r-2}
  r_{-2} = \kappa_- \alpha_{j,k} j \,.
\ee
In the framework of our model, $\ell$ should be a random number satisfying the restrictions
\begin{equation}\label{ell-cond}
  (n-1-nk+i+2j)_+ \le \ell \le \min\{i,n-1\} \,,
\end{equation}
where the upper bound should be obvious and the lower bound implies that the last condition in \eqref{discr-cond} is
satisfied after the reaction. We shall use the choice
\be{def:ell}
   \ell = \ell_{i,j,k} := \left\lfloor\frac{(n-1)i}{nk-2j} \right\rceil\,,
\ee
which can be interpreted as the rounded ($\lfloor\cdot\rceil$ denotes the closest integer) expectation value for the number 
of one-hand bound $Ubi$ on the lost oligomer in terms
of the ratio between the number $n-1$ of available binding sites on the lost oligomer and the total number $nk-2j$ of
available binding sites for one-hand bound $Ubi$ in the whole aggregate. It is easily seen that in the relevant situation 
$\alpha_{j,k}>0$, i.e. $nk-2j\ge n-1$, the choice \eqref{def:ell} without the rounding satisfies the conditions \eqref{ell-cond}.
Since the bounds in \eqref{ell-cond} are integer, the same is true for the rounded version.

Note that we neglect the possibility to lose more than one oligomer by breaking a cross-link,
i.e. the fragmentation of the aggregate into two smaller ones. This is a serious and actually questionable modelling assumption.
An a posteriori justification will be provided by some of the results of the following section, showing that growing aggregates
are tightly connected.
\item {\bf Loosening of the aggregate} by breaking a cross-link, requiring at least one excess both-hand bound $Ubi$,
i.e. $j\ge k$:
\begin{align*}
(i,j,k) &\xrightarrow{\kappa_- (1-\alpha_{j,k})} (i+1,j-1,k) \,.
\end{align*}
This is the reverse of reaction 3 with the rate
\be{def:r-3}
  r_{-3} = \kappa_- (1-\alpha_{j,k})j \,,
\ee
which respects the requirement $j\ge k$ for a positive rate, because of
$$
  1 - \alpha_{j,k} = \min\left\{ 1, \frac{2(j-k+1)}{(n-2)k+4-n} \right\}\,.
$$
\end{enumerate}

\begin{figure}[h!]
\begin{tikzpicture}[scale=0.275]
\draw[black, line width=5pt, cap=round]  (2,3)-- (2,-2);
\draw[line width=5pt, cap=round] (4,5)-- (4,0);
\draw[line width=5pt, cap=round] (6,3.7)-- (6,-1.3);

\draw[red, line width=5pt, cap=round] (6,3.2)-- (4,4.5);
\draw[red, line width=5pt, cap=round] (6,1.2)-- (4,2.5);
\draw[red, line width=5pt, cap=round] (2,-0.5)-- (4,1.5);
\draw[green, line width=5pt, cap=round] (6,-0.8)-- (7,0.2);
\draw[green, line width=5pt, cap=round] (2,2.5)-- (1,1.5);

\fill[white] (2,2.5) circle (0.25);
\fill[gray!50] (2,1.5) circle (0.25);
\fill[gray!50] (2,0.5) circle (0.25);
\fill[white] (2,-0.5) circle (0.25);
\fill[gray!50] (2,-1.5) circle (0.25);

\fill[white] (4,4.5) circle (0.25);
\fill[gray!50] (4,3.5) circle (0.25);
\fill[white] (4,2.5) circle (0.25);
\fill[white] (4,1.5) circle (0.25);
\fill[gray!50] (4,0.5) circle (0.25);

\fill[white] (6,3.2) circle (0.25);
\fill[gray!50] (6,2.2) circle (0.25);
\fill[white] (6,1.2) circle (0.25);
\fill[gray!50] (6,0.2) circle (0.25);
\fill[white] (6,-0.8) circle (0.25);
\fill[gray!50] (7,0.2) circle (0.25);
\fill[gray!50] (1,1.5) circle (0.25);

\draw[->,line width=3pt] (8.5,2)--(10.5,2);
\draw[black, line width=5pt, cap=round]  (13,3)-- (13,-2);
\draw[line width=5pt, cap=round] (15,5)-- (15,0);
\draw[line width=5pt, cap=round] (17,3.7)-- (17,-1.3);

\draw[red, line width=5pt, cap=round] (17,3.2)-- (15,4.5);
\draw[red, line width=5pt, cap=round] (17,1.2)-- (15,2.5);
\draw[red, line width=5pt, cap=round] (13,-0.5)-- (15,1.5);
\draw[red, line width=5pt, cap=round] (17,-0.8)-- (13,-1.5);
\draw[green, line width=5pt, cap=round] (13,2.5)-- (12,1.5);

\fill[gray!50] (12,1.5) circle (0.25);
\fill[white] (13,2.5) circle (0.25);
\fill[gray!50] (13,1.5) circle (0.25);
\fill[gray!50] (13,0.5) circle (0.25);
\fill[white] (13,-0.5) circle (0.25);
\fill[white] (13,-1.5) circle (0.25);

\fill[white] (15, 4.5) circle (0.25);
\fill[gray!50] (15,3.5) circle (0.25);
\fill[white] (15,2.5) circle (0.25);
\fill[white] (15,1.5) circle (0.25);
\fill[gray!50] (15,0.5) circle (0.25);

\fill[white] (17,3.2) circle (0.25);
\fill[gray!50] (17,2.2) circle (0.25);
\fill[white] (17,1.2) circle (0.25);
\fill[gray!50] (17,0.2) circle (0.25);
\fill[white] (17,-0.8) circle (0.25);

\end{tikzpicture}
\hspace{3cm}
\begin{tikzpicture}[scale=0.275]

\draw[black, line width=5pt, cap=round]  (0,3)-- (0,-2);
\draw[line width=5pt, cap=round] (2,5)-- (2,0);
\draw[line width=5pt, cap=round] (4,3.7)-- (4,-1.3);

\draw[red, line width=5pt, cap=round] (4,3.2)-- (2,4.5);
\draw[red, line width=5pt, cap=round] (4,1.2)-- (2,2.5);
\draw[red, line width=5pt, cap=round] (0,-0.5)-- (2,1.5);
\draw[green, line width=5pt, cap=round] (4,-0.8)-- (5,0.2);

\fill[gray!50] (0,2.5) circle (0.25);
\fill[gray!50] (0,1.5) circle (0.25);
\fill[gray!50] (0,0.5) circle (0.25);
\fill[white] (0,-0.5) circle (0.25);
\fill[gray!50] (0,-1.5) circle (0.25);

\fill[white] (2,4.5) circle (0.25);
\fill[gray!50] (2,3.5) circle (0.25);
\fill[white] (2,2.5) circle (0.25);
\fill[white] (2,1.5) circle (0.25);
\fill[gray!50] (2,0.5) circle (0.25);

\fill[white] (4,3.2) circle (0.25);
\fill[gray!50] (4,2.2) circle (0.25);
\fill[white] (4,1.2) circle (0.25);
\fill[gray!50] (4,0.2) circle (0.25);
\fill[white] (4,-0.8) circle (0.25);
\fill[gray!50] (5,0.2) circle (0.25);
\draw[->,line width=3pt] (7,2.5)--(9,4);
\draw[->,line width=3pt] (7,1.5)--(9,0);
\draw[line width=5pt, cap=round] (13,8)-- (13,3);
\draw[line width=5pt, cap=round] (15,6.7)-- (15,1.7);

\draw[red, line width=5pt, cap=round] (15,6.2)-- (13,7.5);
\draw[red, line width=5pt, cap=round] (15,4.2)-- (13,5.5);
\draw[green, line width=5pt, cap=round] (15,2.2)-- (16,3.2);
\draw[green, line width=5pt, cap=round] (13,4.5)-- (12,3.5);
\fill[white] (13,7.5) circle (0.25);
\fill[gray!50] (13,6.5) circle (0.25);
\fill[white] (13,5.5) circle (0.25);
\fill[white!50] (13,4.5) circle (0.25);
\fill[gray!50] (13,3.5) circle (0.25);
\fill[gray!50] (12,3.5) circle (0.25);

\fill[white] (15,6.2) circle (0.25);
\fill[gray!50] (15,5.2) circle (0.25);
\fill[white] (15,4.2) circle (0.25);
\fill[gray!50] (15,3.2) circle (0.25);
\fill[white] (15,2.2) circle (0.25);
\fill[gray!50] (16,3.2) circle (0.25);

\draw[line width=5pt, cap=round] (11,0)-- (11,-5);
\draw[line width=5pt, cap=round] (13,2)-- (13,-3);
\draw[line width=5pt, cap=round] (15,0.7)-- (15,-4.3);

\draw[red, line width=5pt, cap=round] (11,-3.5)-- (13,-1.5);
\draw[red, line width=5pt, cap=round] (15,-1.8)-- (13,-0.5);
\draw[green, line width=5pt, cap=round] (15,-3.8)-- (16,-2.8);
\draw[green, line width=5pt, cap=round] (13,1.5)-- (14,0.8);
\fill[gray!50] (11,-0.5) circle (0.25);
\fill[gray!50] (11,-1.5) circle (0.25);
\fill[gray!50] (11,-2.5) circle (0.25);
\fill[white] (11,-3.5) circle (0.25);
\fill[gray!50] (11,-4.5) circle (0.25);

\fill[white!50] (13,1.5) circle (0.25);
\fill[gray!50] (13,0.5) circle (0.25);
\fill[white] (13,-0.5) circle (0.25);
\fill[white] (13,-1.5) circle (0.25);
\fill[gray!50] (13,-2.5) circle (0.25);
\fill[gray!50] (14,0.8) circle (0.25);

\fill[gray!50] (15,0.2) circle (0.25);
\fill[gray!50] (15,-0.8) circle (0.25);
\fill[white] (15,-1.8) circle (0.25);
\fill[gray!50] (15,-2.8) circle (0.25);
\fill[white] (15,-3.8) circle (0.25);
\fill[gray!50] (16,-2.8) circle (0.25);
\end{tikzpicture}
\caption{Examples for Reaction 3 (left, $(2,3,3)\to(1,4,3)$), Reaction 5 (right, up, $(1,3,3)\to {\rm p62}_5 + (2,2,2)$, $\ell=0$), and Reaction 6 (right, down, $(1,3,3)\to(2,2,3)$).  \label{fig:reac3}}
\end{figure}
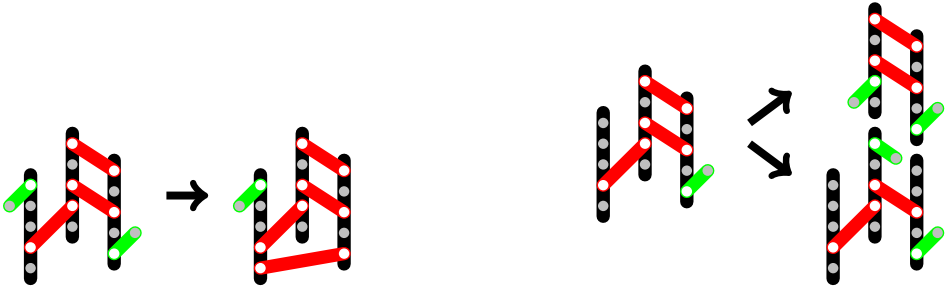

\paragraph{A deterministic model for large aggregates:}
The next step is the formulation of an evolution problem for a probability density on the set of admissible states $(i,j,k)$.
In this problem the discrete state is scaled by a typical value $k_0$ of $[Ubi]$ and $[{\rm p62}_n]$, assumed of the same
order of magnitude: 
\be{def:pqr}
    p := \frac{i}{k_0} \,,\qquad q := \frac{j}{k_0} \,,\qquad r := \frac{k}{k_0} \,.
\ee
It is then consistent with the definitions of $\kappa_1$ and $\kappa_2$ above to introduce $\kappa_3 := \kappa_3' k_0$.
In the large aggregate limit $k_0\to\infty$, the new unknowns become continuous, and the equation for the probability
density becomes a transport equation (see Appendix~\ref{app:model} for the details). It possesses deterministic solutions
governed by the ODE initial value problem
\begin{equation} \label{eq_1}
\begin{aligned}
  \dot p &= (\kappa_1 - \kappa_3 p) (nr-p-2q) + \kappa_{-} q\left(1- \f{(n-1)p}{(n-2)r}\right) - (\kappa_2 + \kappa_{-1})p \,, &\qquad p(0)=p_0 \,,\\
  \dot q &= \kappa_2 p + \kappa_3 p(nr-p-2q) - \kappa_{-}q \,, &\qquad q(0)=q_0 \,,  \\
  \dot r &= \kappa_2 p - \kappa_{-} q \alpha(q,r) \,, &\qquad r(0)=r_0 \,,
  \end{aligned}
 \end{equation}
where 
\begin{equation}\label{def:alpha}
    \alpha(q,r):=\f{nr-2q}{(n-2)r} 
\end{equation}
is the limit of $\alpha_{j,k}$ as $k_0\to\infty$.
The conditions for admissible states $(p,q,r)\in [0,\infty)^2\times (0,\infty)$ are obtained in the limit of \eqref{discr-cond}: 
\be{cont-cond}
    s:= nr-p-2q \geq 0 \,,\qquad q \geq r \,,
\ee
implying, as expected,
\begin{align}
   0 \leq \alpha(q,r) \leq 1 \,. \label{alpha}
\end{align}
The equations satisfied by $s$ and $q-r$,
\begin{eqnarray}
\label{eq:s}
\dot s &=& (n-1)\kappa_2 p +\kappa_{-1} p +\kappa_{-} q \f{2(q-r)}{(n-2)r}
   - s\left(\kappa_3 p + \kappa_1 + \kappa_{-} q \f{n-1}{(n-2)r}\right) \,,\\
(q-r)\dot\,  &=& \kappa_3 p s - \f{2\kappa_- q}{(n-2)r}(q-r) \,,\label{eq:q-r}
\end{eqnarray}
show that the conditions \eqref{cont-cond} are propagated by \eqref{eq_1}.

\section{Analytic results \label{section 3}}

\paragraph{Global existence:}
Since the right hand sides of \eqref{eq_1} contain quadratic nonlinearities, it seems possible that solutions blow up in finite
time. On the other hand, the right hand sides are not well defined for $r=0$. The essence of the following global existence
result is that neither of these difficulties occurs.

\begin{thm}\label{thm 1} 
Let $3\le n\in\mathbb{N}$ and $\kappa_1,\kappa_2,\kappa_3,\kappa_{-1},\kappa_{-} \geq 0$. Let $(p_0,q_0,r_0) \in (0,\infty)^3$
satisfy \eqref{cont-cond}. Then problem~\eqref{eq_1} has a unique global solution satisfying $(p(t),q(t),r(t))\in (0,\infty)^3$
as well as \eqref{cont-cond} for any $t>0$. Also the following estimates hold for $t>0$: 
\begin{align} 
&p(t)+ q(t)+r(t) \leq (p_0+q_0+r_0) \exp\left(t\,\max\{\kappa_1 n,\kappa_2\}\right)  \,, \label{thm1:1}\\
&r(t) \geq \frac{2}{n}q(t) \ge \frac{2q_0}{n} \exp (-\kappa_{-} t)  \,. \label{thm1:2}
 \end{align}
\end{thm}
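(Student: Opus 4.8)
The plan is to combine standard Picard--Lindel\"of local theory with an invariant-region argument and a priori bounds, and then close everything by a continuation argument. Since the right-hand side of \eqref{eq_1} is $C^1$ (hence locally Lipschitz) on the open set $D=\{(p,q,r):r>0\}$ --- the only singularity being at $r=0$ --- Picard--Lindel\"of provides a unique maximal solution on some interval $[0,T_{\max})$ with values in $D$, and all subsequent work takes place on $[0,T_{\max})$.

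The core step is to show that the admissible region together with positivity, namely $p>0$, $q>0$, $s\ge 0$, $q-r\ge 0$, is preserved. I would read off the inward-pointing behaviour of the vector field on each face directly from the structure of the equations: evaluating $\dot p$ at $p=0$ gives $\kappa_1 s+\kappa_- q$, evaluating $\dot q$ at $q=0$ gives $\kappa_2 p+\kappa_3 ps$, evaluating \eqref{eq:s} at $s=0$ gives $(n-1)\kappa_2 p+\kappa_{-1}p+\kappa_- q\,\frac{2(q-r)}{(n-2)r}$, and evaluating \eqref{eq:q-r} at $q-r=0$ gives $\kappa_3 ps$; each of these is non-negative provided the remaining inequalities hold and $r>0$. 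The subtlety is that the four inequalities are coupled and the initial point may lie on the boundary ($s_0=0$ or $q_0=r_0$ is permitted). The cleanest route I see is to exploit that, for fixed $(p,q,r)$, the pair \eqref{eq:s}--\eqref{eq:q-r} is a quasimonotone (cooperative) linear system in $(s,q-r)$ with non-negative off-diagonal couplings and non-negative forcing: an integrating-factor comparison then propagates $s,q-r\ge 0$, while $q>0$ follows from $\dot q\ge -\kappa_- q$ and $p>0$ from $\dot p\ge -C(t)p$ with $C=\kappa_3 s+\kappa_- q\frac{n-1}{(n-2)r}+\kappa_2+\kappa_{-1}$ locally bounded, both via Gr\"onwall. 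I would package all four into a single invariant-region (Nagumo subtangentiality) statement so that the coupling is handled simultaneously, using a perturbation to absorb the degenerate case of boundary initial data.

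With invariance in hand the estimates are short computations. For \eqref{thm1:2}, the identity $nr-2q=s+p$ gives $r\ge\frac{2}{n}q$ at once from $s,p\ge 0$, and discarding the non-negative terms $\kappa_2 p+\kappa_3 ps$ in the $q$-equation yields $\dot q\ge -\kappa_- q$, hence $q(t)\ge q_0 e^{-\kappa_- t}$, which furnishes the second inequality. For \eqref{thm1:1}, summing the three equations cancels the $\kappa_3$ contributions and leaves $\frac{d}{dt}(p+q+r)=\kappa_1 s-\kappa_- q\,\frac{(n-1)p+(nr-2q)}{(n-2)r}+(\kappa_2-\kappa_{-1})p$; bounding $\kappa_1 s\le\kappa_1 nr$, dropping the non-positive middle term, and using $\kappa_2-\kappa_{-1}\le\kappa_2$ gives $\frac{d}{dt}(p+q+r)\le\max\{\kappa_1 n,\kappa_2\}(p+q+r)$, and Gr\"onwall finishes it.

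Finally, global existence follows by continuation: on any $[0,T]\subset[0,T_{\max})$ the bounds \eqref{thm1:1}--\eqref{thm1:2} confine the solution to a compact subset of $D$, bounded above by \eqref{thm1:1} and bounded away from the singular set $r=0$ by $r\ge\frac{2q_0}{n}e^{-\kappa_- t}$. Since a maximal solution must leave every compact subset of $D$ as $t\to T_{\max}$ whenever $T_{\max}<\infty$, this forces $T_{\max}=\infty$. The main obstacle throughout is the coupled invariance of the core step --- in particular handling boundary initial data and the $r=0$ singularity at the same time --- because the very bounds that rule out blow-up and $r\to 0$ are themselves consequences of that invariance, so the argument must be organized as one bootstrap rather than as independent stages.
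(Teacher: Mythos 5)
Your proposal is correct and follows essentially the same route as the paper's proof: Picard--Lindel\"of local existence, propagation of positivity and of the constraints \eqref{cont-cond} read off from the structure of \eqref{eq_1}, \eqref{eq:s}, \eqref{eq:q-r}, the identical Gronwall estimates yielding \eqref{thm1:1} and \eqref{thm1:2}, and global existence by continuation from these bounds. The only difference is level of detail: the paper dismisses the invariance step as ``an immediate consequence of the form of the equations,'' whereas you spell out the cooperative structure of the $(s,q-r)$ subsystem and the bootstrap needed to handle the coupled boundary cases --- a welcome elaboration, not a different approach.
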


\begin{proof}
Local existence and uniqueness is a consequence of the Picard-Lindel\"of theorem. Global existence will follow from the 
bounds stated in the theorem. Positivity of the solution components, of $s=nr-p-2q$, and of $q-r$ is an immediate
consequence of the form of the equations \eqref{eq_1}, \eqref{eq:s}, \eqref{eq:q-r}. This also implies 
$$
    \dot p + \dot q + \dot r \le \kappa_1 nr + \kappa_2 p \le \max\{\kappa_1 n, \kappa_2\}(p+q+r) \,,
$$
which shows \eqref{thm1:1} by the Gronwall lemma. With \eqref{cont-cond}, the equation for $q$ in \eqref{eq_1} implies
$$
   \dot q \ge - \kappa_- q \,,
$$
and another application of the Gronwall lemma and of \eqref{cont-cond} proves \eqref{thm1:2} and, thus, completes the 
proof of the theorem.
 \end{proof}
 
 \paragraph{Long-time behaviour:}
 The first step in the long-time analysis is the investigation of steady states. Although the right hand sides of \eqref{eq_1}
 are not well defined for $r=0$, the origin $p=q=r=0$ can be considered as a steady state since
 $$
    0\le\alpha(q,r)\le 1 \qquad\mbox{and}\qquad \frac{p}{r} \le n
 $$
 hold for admissible states satisfying \eqref{cont-cond}. The origin is the only acceptable steady state with $r=0$, since
 $\alpha(q,r)$ and $p/r$ are not well defined in this case, so the factor $q$, multiplying them in the equations, needs to be
 zero. Finally, for a steady state this implies also $p=0$. The following result shows that at most one other steady state
 is possible which, somewhat miraculously, can be computed explicitly.
 
 \begin{thm}\label{thm:steady-state}
 Let $3\le n\in\mathbb{N}$, $\kappa_1,\kappa_2,\kappa_3,\kappa_{-1},\kappa_- >0$, and let
 \be{def:baralpha}
   \bar \alpha := \frac{n}{n-2} + \frac{\kappa_{-1} + \kappa_1 - \sqrt{(\kappa_1 + \kappa_{-1})^2 + 4  \kappa_1  
       \kappa_2 (n-1)}}{\kappa_{-} (n-1)}  
\ee
satisfy $0<\bar\alpha <1$. Then there exists an admissible steady state $(\bar p,\bar q,\bar r)\in (0,\infty)^3$ of \eqref{eq_1}
given by
\begin{eqnarray*}
  \bar p &=& \frac{\kappa_1 \kappa_2 (n-2)}{\kappa_3 (\kappa_- \hat q (n-1) 
     + \kappa_{-1} (n-2))}\, \frac{1-\bar\alpha}{\bar\alpha} \,,\\
  \bar q &=& \frac{\kappa_1 \kappa_2^2 (n-2)}{\kappa_3 \kappa_-(\kappa_- \hat q (n-1) 
     + \kappa_{-1} (n-2))}\, \frac{1-\bar\alpha}{\bar\alpha^2} \,,\\
  \bar r &=& \frac{\kappa_1 \kappa_2^2 (n-2)}{\hat q\kappa_3 \kappa_-(\kappa_- \hat q (n-1) 
     + \kappa_{-1} (n-2))}\, \frac{1-\bar\alpha}{\bar\alpha^2} \,,\\
\end{eqnarray*}
with $\bar\alpha = \alpha(\bar q,\bar r)$ and $\hat q = (n - (n-2)\bar\alpha)/2 \in (1,n/2)$. There exists no other steady state (besides the origin).
 \end{thm}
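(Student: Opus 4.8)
The plan is to characterize the nontrivial steady states as the common zeros of the right-hand sides of \eqref{eq_1} and to show that, after a careful sequence of substitutions, the three algebraic equations collapse to a single quadratic in the ratio $\hat q:=q/r$. First I would replace the conditions $\dot p=\dot q=\dot r=0$ by an equivalent but more convenient set. Writing $s=nr-p-2q$ and $\alpha=\alpha(q,r)$ from \eqref{def:alpha}, the conditions $\dot r=0$ and $\dot q-\dot r=0$ read
\[
   \kappa_2 p=\kappa_- q\,\alpha \quad\text{(B)}\,,\qquad \kappa_3 p\,s=\kappa_- q(1-\alpha)\quad\text{(A)}\,,
\]
and one checks that, given (A) and (B), the remaining equation $\dot q=0$ holds automatically. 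Substituting (A) and (B) into $\dot p=0$ and cancelling, the first equation of \eqref{eq_1} reduces to
\[
   \kappa_1 s=p\left(\kappa_{-1}+\f{(n-1)\kappa_-\hat q}{n-2}\right)\quad\text{(C)}\,.
\]

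Before solving I would dispose of the degenerate cases. A short argument using (A)–(C) shows that any steady state with $r>0$ and $p=0$ collapses to the origin, and that at any genuinely interior steady state all of $p,q,r,s$ are strictly positive with $0<\alpha<1$: if $s=0$ then (A) forces $\alpha=1$, hence $q=r$, and then (C) is violated; once $s>0$ and $p>0$, (A) gives $\alpha<1$ and (B) gives $\alpha>0$. This both justifies the divisions below and shows the admissibility inequalities \eqref{cont-cond} hold strictly.

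The heart of the proof is the reduction. Dividing (A) by (B) eliminates $p$ and $q$ and yields $s$ as a function of $\alpha$ alone,
\[
   s=\f{\kappa_2}{\kappa_3}\,\f{1-\alpha}{\alpha}\,.
\]
Writing $q=\hat q\,r$ and using the identity $\alpha=(n-2\hat q)/(n-2)$, equivalently $\hat q=(n-(n-2)\alpha)/2$, turns the awkward $p/r$ nonlinearity in (C) into a term linear in $\hat q$, so (C) expresses $p$ as a multiple of $s$; combining (B) with $s=(n-2)r\alpha-p$ produces a second expression for $p/s$. Equating the two and clearing denominators gives the quadratic
\[
   (n-1)\kappa_-^2\,\hat q^{\,2}+(n-2)\kappa_-(\kappa_1+\kappa_{-1})\,\hat q-\kappa_1\kappa_2(n-2)^2=0\,.
\]
Since its constant term is negative, it has exactly one positive root; solving and converting back through $\alpha=(n-2\hat q)/(n-2)$ reproduces $\bar\alpha$ of \eqref{def:baralpha}, and the hypothesis $0<\bar\alpha<1$ is precisely the statement that this root lies in $(1,n/2)$, so that $\hat q\in(1,n/2)$ and the state is admissible. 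I would then back-substitute: $s$ from the displayed formula, then $p$ from (C), then $q$ from (B) and $r=q/\hat q$, recovering the stated closed forms for $\bar p,\bar q,\bar r$; uniqueness of the nontrivial steady state is inherited from uniqueness of the positive root.

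The main obstacle I anticipate is bookkeeping rather than conceptual: keeping the chain of substitutions reversible (so that existence and uniqueness come out together) and verifying that the single positive root keeps every intermediate denominator positive — in particular that $\kappa_2(n-2)-\kappa_-\hat q>0$ at the root, which is needed when passing from the $p$-versus-$s$ relation coming from (B) to the quadratic. This I would settle by evaluating the quadratic at $\hat q=\kappa_2(n-2)/\kappa_-$ and finding its value there to be positive, which places the positive root safely below that threshold.
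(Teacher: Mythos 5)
Your proposal is correct and takes essentially the same route as the paper's proof: both reduce the steady-state equations to the ratio variable $\hat q = q/r$ and arrive at the identical quadratic $(n-1)\kappa_-^2\,\hat q^2 + (n-2)\kappa_-(\kappa_1+\kappa_{-1})\,\hat q - \kappa_1\kappa_2(n-2)^2 = 0$, whose unique positive root gives $\bar\alpha$, followed by the same back-substitution ($s$, then $p$ from your (C), then $q$ from (B), then $r=q/\hat q$) yielding exactly the stated formulas. The differences are purely organizational --- the paper substitutes $\hat p = \kappa_-\hat q(n-2\hat q)/(\kappa_2(n-2))$ into the sum of the first two normalized equations and must discard the spurious factor $n-2\hat q$, whereas you equate two expressions for $p/s$ after an explicit positivity analysis; your extra check that the positive root satisfies $\kappa_-\hat q < \kappa_2(n-2)$ is also sound, since the quadratic evaluates there to $(n-2)^2\kappa_2\left((n-1)\kappa_2+\kappa_{-1}\right) > 0$.
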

 
\begin{proof}
Assuming $\bar r> 0,$ we introduce
\begin{equation}
\hat p = \frac{\bar p} {\bar r} \label{p_hat} \,, \qquad\qquad \hat q =  \frac{\bar q} {\bar r} \,,
\end{equation}
and rewrite the steady state equations in terms of $\hat p$ and $\hat q$:
\begin{align}
0 &= (\kappa_1-\kappa_3 \bar p)(n-\hat p-2\hat q) + \kappa_{-} \hat q \left(1 - \hat p\frac{n-1}{n-2}\right) -(\kappa_2 + \kappa_{-1})\hat p \label {eq_1_bis} \,, \\
0 &= \kappa_2 \hat p + \kappa_3 \bar p (n-\hat p-2\hat q) -\kappa_{-} \hat q  \label{eq_2_bis} \,, \\
0 &= \kappa_2 \hat p -\kappa_{-} \hat q \bar \alpha \,,\qquad \mbox{with } \bar \alpha=\frac{n-2\hat q}{n-2} \,. \label{eq_3-bis} 
\end{align} 
From \eqref{eq_3-bis} we obtain
\begin{align}
\hat p = \frac{\kappa_- \hat q}{\kappa_2} \bar\alpha = \frac{\kappa_- \hat q(n-2\hat q)}{\kappa_2 (n-2)} \,, \label{eq_5}
\end{align}
which is substituted into the sum of \eqref{eq_1_bis} and \eqref{eq_2_bis}:
\begin{align*}
  (n - 2\hat q)\left(\kappa_1 - \frac{\kappa_1\kappa_-}{\kappa_2(n-2)}\hat q 
  - \frac{\kappa_{-}^2 (n-1)}{\kappa_2 (n-2)^2}\hat q^2 - \frac{\kappa_{-1}  \kappa_-}{\kappa_2(n-2)}\hat q\right) = 0 \,.
\end{align*}
The option $n=2\hat q$ leads to $\bar\alpha=0$, implying $\hat p = 0$ and, thus, $\bar p=0$, which contradicts 
\eqref{eq_2_bis}. Therefore the second paranthesis has to vanish, leading to a quadratic equation for $\hat q$ with
the only positive solution
$$
    \hat q = \frac{(n-2)\left(-\kappa_{-1} - \kappa_1 + \sqrt{(\kappa_1 + \kappa_{-1})^2  + 4  \kappa_1 \kappa_2 (n-1)}\right)}
    {2 \kappa_{-} (n-1)} \,. 
$$
Now \eqref{eq_3-bis} implies the formula for $\bar\alpha$ stated in the theorem and we note that $0<\bar \alpha <1$ implies 
$1<\hat q<n/2$. We compute $\hat p$ from $\hat q$ by~\eqref{eq_5} and note that $\hat p>0$ since $\bar\alpha>0$. 
We then compute $\hat s = \bar s/\bar r=n-\hat p-2\hat q$ from the sum of \eqref{eq_1_bis} and \eqref{eq_2_bis}:
$$
    \hat s = \hat p \f{\kappa_{-1}(n-2) +\kappa_- \hat q (n-1)}{(n-2)\kappa_1} 
      = \f{\kappa_- \hat q \left(\kappa_{-1}(n-2) +\kappa_- \hat q (n-1)\right)}{(n-2)\kappa_1 \kappa_2} \bar\alpha\,,
$$
which proves $\hat s>0$. Finally we obtain the formula for $\bar p$ from \eqref{eq_2_bis} as well as 
$\bar r=\bar p/\hat p$ and $\bar q=\bar r \hat q$. 
\end{proof}

For convenience below, the conditions in the theorem are made more explicit in terms of the parameters by
\begin{align}
\bar \alpha <1 \quad &\Leftrightarrow & \hat q >1 \quad & \Leftrightarrow & \kappa_1\kappa_2
  > \f{\kappa_-}{n-2}\left( \kappa_1 + \f{n-1}{n-2}\kappa_- +\kappa_{-1}\right ) \,,\label{def:f1} \\
\bar \alpha>0 \quad & \Leftrightarrow & \hat q <\f{n}{2} \quad& \Leftrightarrow & \kappa_1\kappa_2
   < \f{\kappa_- n}{2(n-2)}\left(\kappa_1 + \f{n(n-1)}{2(n-2)} \kappa_-+\kappa_{-1}\right) \,. \label{def:f2} 
\end{align}
The steady state approaches the origin $p=q=r=0$ as $\bar\alpha \to 1$, whereas all its components become unbounded
as $\bar\alpha\to 0$. This motivates the following.

\begin{conjecture}\label{conj:cases}
With the notation of Theorem \ref{thm:steady-state},
\begin{enumerate}
\item if $0<\bar\alpha<1$, then all solutions of \eqref{eq_1} converge to $(\bar p,\bar q,\bar r)$ as $t\to\infty$,
\item if $\bar\alpha\ge1$, then all solutions of \eqref{eq_1} converge to $(0,0,0)$ as $t\to\infty$,
\item if $\bar\alpha\le 0$, then for all solutions of \eqref{eq_1} we have $p(t),q(t),r(t) \to \infty$ as $t\to\infty$.
\end{enumerate}
\end{conjecture}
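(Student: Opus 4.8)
The plan is to pass to the scaled \emph{shape} variables $\hat p = p/r$ and $\hat q = q/r$, which are well defined and, by Theorem~\ref{thm 1}, confined for $t>0$ to the bounded shape triangle $\hat p>0$, $1<\hat q<n/2$, $\hat s := n-\hat p-2\hat q>0$; note that $\alpha$ from \eqref{def:alpha} depends on the shape alone, $\alpha=(n-2\hat q)/(n-2)\in(0,1)$. A short computation turns \eqref{eq_1} into a system in which the size $r$ decouples multiplicatively,
\begin{equation*}
  \dot r = r\,\lambda(\hat p,\hat q)\,,\qquad \lambda(\hat p,\hat q):=\kappa_2\hat p-\kappa_-\hat q\,\alpha(\hat q)\,,
\end{equation*}
while the shape obeys a planar system coupled back to $r$ only through the single compactification term $\kappa_3\,r\,\hat p\,\hat s$, which appears with opposite signs in $\dot{\hat p}$ and $\dot{\hat q}$ so that $\dot{\hat p}+\dot{\hat q}$ is $r$-free. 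This suggests a two-step strategy: first show that the shape $(\hat p,\hat q)$ converges, then read off the sign of the limiting growth rate $\lambda^\ast$ and integrate $\dot r=r\lambda$. The three regimes should correspond to three limiting shapes: the interior fixed point $(\hat p^\ast,\hat q^\ast)$ with $\lambda^\ast=0$ (Case~1, $r\to\bar r$), the loose boundary $\hat q\to1$, $\alpha\to1$ with $\lambda^\ast<0$ (Case~2, $r\to0$), and the tight corner $(\hat p,\hat q)\to(0,n/2)$, $\alpha\to0$, where $\hat p\hat s\to0$ and $\lambda\to0^+$ (Case~3, $r\to\infty$).

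The local statements I would establish first by linearization. At the interior steady state of Theorem~\ref{thm:steady-state} one computes the $3\times3$ Jacobian of \eqref{eq_1} and checks, using $0<\bar\alpha<1$, that it is Hurwitz, giving local asymptotic stability in Case~1; at the origin the analogous computation, carried out in desingularized coordinates that remove the $1/r$ terms (i.e. in the shape variables), should give attraction exactly when $\bar\alpha\ge1$, that is under \eqref{def:f1}. For Case~3 I would compactify by $\rho=1/r$ and analyze the boundary equilibrium $(\hat p,\hat q,\rho)=(0,n/2,0)$, which $\bar\alpha\le0$ should render attracting within the physical region. These are finite, if tedious, eigenvalue computations and are not the real difficulty.

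The crux is the \emph{global} step. The most promising concrete route is to first establish boundedness: I expect that $\bar\alpha>0$ (equivalently \eqref{def:f2}) forces the existence of a compact forward-invariant region, separating Cases~1--2 from the escape in Case~3. The equations \eqref{eq:s} and \eqref{eq:q-r}, in which $s$ and $q-r$ each appear with a strictly dissipative coefficient, are the natural tool for trapping the shape away from the tight corner. Once an orbit is confined and bounded away from $r=0$, the shape system is asymptotically autonomous, and I would invoke Poincar\'e--Bendixson together with Thieme's theory for asymptotically autonomous planar flows to conclude convergence to the unique shape equilibrium, provided periodic orbits of the limiting planar field can be excluded. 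For Case~2 a direct Lyapunov argument also looks viable: using the cancellation identity $\dot p+\dot q=\kappa_1 s-\kappa_{-1}p-\kappa_-\frac{n-1}{n-2}\frac{pq}{r}$, a weighted functional $L=p+q+c\,r$ can plausibly be shown to satisfy $\dot L\le-\delta L$ once the shape has entered the loose regime $\alpha\approx1$, yielding $r\to0$. Transferring shape convergence to the stated behaviour of $(p,q,r)$ is then a Gronwall/comparison argument on $\dot r=r\lambda$.

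I expect the main obstacle to be precisely the exclusion of nontrivial recurrence in this global step. The system is genuinely three-dimensional and, in the natural coordinates, has a Jacobian with indefinite off-diagonal sign pattern, so neither the Poincar\'e--Bendixson theorem nor the theory of monotone (cooperative/competitive) systems applies directly; the reduction to an asymptotically autonomous planar problem only works after boundedness and the separation of scales between $r$ and the shape have been made rigorous. The explicit, unique nontrivial steady state of Theorem~\ref{thm:steady-state} strongly suggests that a global Lyapunov function exists, and constructing one — plausibly of relative-entropy type adapted to the reaction structure, despite the phenomenological closures $\alpha$ and $\ell$ — is where I expect the real work to lie. The borderline case is Case~3, where $\lambda\to0^+$ makes exponential bounds useless: one must instead show $\hat p\hat s\to0$, forcing $(\hat p,\hat q)\to(0,n/2)$, and then capture the quadratic-in-time asymptotics referenced in the abstract to prove $\int^\infty\lambda\,dt=\infty$ and hence $r\to\infty$.
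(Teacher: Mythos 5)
Your statement is Conjecture~\ref{conj:cases}, which the paper itself explicitly leaves \emph{open}: it is supported only by the numerical simulations of Figures~\ref{steady state}--\ref{growing polynomial}, the authors state that ``the conjecture is open, and its proof is not expected to be easy,'' and they even point out that the local stability of the origin in Case~2 resists standard methods because the right-hand side of \eqref{eq_1} is not smooth at $r=0$. So there is no proof in the paper to compare against, and your text --- by its own candid admission --- is a research program rather than a proof: every load-bearing step (the Hurwitz property of the Jacobian, the compact forward-invariant region, the global Lyapunov function, the exclusion of recurrence, the Case-3 asymptotics) is deferred or ``expected.'' To your credit, the framework is sound and consistent with the paper's partial results: the factorization $\dot r = r\left(\kappa_2\hat p - \kappa_-\hat q\,\alpha(\hat q)\right)$ is correct; Theorem~\ref{thm:steady-state} computes the nontrivial steady state precisely in your shape ratios $\hat p=\bar p/\bar r$, $\hat q=\bar q/\bar r$; the ``tight corner'' $\alpha\to0$, $2q_2=nr_2$ is exactly what the paper's formal polynomial-growth theorem finds; and passing to shape variables genuinely removes the smoothness obstruction at $r=0$ that the paper flags. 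But none of this closes the conjecture.

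The concrete gaps are these. (i) \emph{Case 1 is circular in your reduction:} the plan to treat the shape system as asymptotically autonomous and invoke Poincar\'e--Bendixson/Thieme fails precisely here, because the coupling term $\kappa_3\, r\,\hat p\,\hat s$ is \emph{not} asymptotically negligible when $r\to\bar r>0$; to know the planar limit system is autonomous you need $r(t)$ to converge, and to get $r(t)$ to converge from $\dot r=r\lambda$ you need the shape to converge with integrable rate --- so the problem remains irreducibly three-dimensional, and neither planar theory nor monotone-systems theory applies, which is the very difficulty the paper alludes to. Moreover ``one checks the Jacobian is Hurwitz'' is asserted, not done: if it failed somewhere in $0<\bar\alpha<1$ (a Hopf scenario), Case~1 would be false, so this cannot be waved through. (ii) \emph{Case 2:} the boundary case $\bar\alpha=1$ is non-hyperbolic, so linearization is silent there; and your estimate $\dot L\le-\delta L$ is conditioned on the orbit having already entered the loose regime $\alpha\approx1$, which is exactly what must be proved. (iii) \emph{Case 3:} you correctly observe that exponential bounds are useless, but you give no mechanism forcing $\hat p\,\hat s\to0$; with the expected quadratic growth one has $\lambda(t)\approx 2/t$, so $r\to\infty$ hinges on the borderline divergence $\int^\infty\lambda\,dt=\infty$, and controlling $o(1/t)$ error terms there is precisely the step the paper carries out only formally. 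In short: your decomposition matches the structure the paper exploits for its steady-state and formal results, but each of the three cases is still missing its central global ingredient, so the conjecture remains open after your argument.
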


The conjecture has been supported by numerical simulations. Figures \ref{steady state}, \ref{zero ss}, and 
\ref{growing polynomial} show typical simulation results corresponding to the three cases. The conjecture is open,
and its proof is not expected to be easy. Note for example that not even the local stability of the origin in Case 2 can be 
investigated by standard methods, since the right hand side of \eqref{eq_1} lacks sufficient smoothness. Partial results will 
be published in separate work.

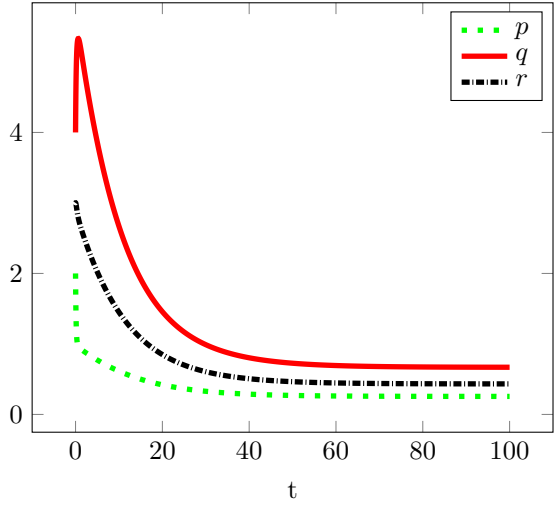
\begin{figure}[h!]
\centering
\begin{tikzpicture}
%code with title on two lines 
		%\begin{axis}[xlabel={t},align=center, title = {Evolution of an aggregate $(p,q,r)$ of initial size $(2,4,3)$ \\ with parameters $\kappa_1= \kappa_2=\kappa_3=\kappa_{-1} = 1$ and $\kappa_{-}=0.5$ }],
		\begin{axis}[xlabel={t}],
			\addplot [color=green,loosely dotted,line width=2.0pt] table[x={t}, y={x}] {donnees_2.txt};
			\addlegendentry{$p$}
			\addplot [color=red,solid,line width=2.0pt] table[x={t}, y={y}] {donnees_2.txt};
			\addlegendentry{$q$}
                         \addplot [color=black,densely dashdotted,line width=2.0pt] table[x={t}, y={z}] {donnees_2.txt};
                         \addlegendentry{$r$}
		\end{axis} 
	\end{tikzpicture}
	\caption{ \label{steady state} Convergence to the non-trivial steady state of Theorem \ref{thm:steady-state}. Simulation of an aggregate $(p,q,r)$ of initial size $(2,4,3)$ with parameters $\kappa_1= \kappa_2=\kappa_3=\kappa_{-1}=1$ and $\kappa_{-}=0.6$, implying $0<\bar\alpha<1$.}
\end{figure}

\begin{figure}[h!]
\centering
\begin{tikzpicture}

%code with title on two lines 
%\begin{axis}[xlabel={t},align=center, title = {Evolution of an aggregate $(p,q,r)$ of initial size $(2,4,3)$ \\ with parameters $\kappa_1= \kappa_2=\kappa_3=\kappa_{-1} = 1$ and $\kappa_{-}=0.5$ }],
		\begin{axis}[xlabel={t}],
		
			\addplot [color=green,loosely dotted,line width=2.0pt] table[x={t}, y={x}] {donnees_3.txt};
			\addlegendentry{$p$}
			\addplot [color=red,solid,line width=2.0pt] table[x={t}, y={y}] {donnees_3.txt};
			\addlegendentry{$q$}
                         \addplot [color=black, densely dashdotted ,line width=2.0pt] table[x={t}, y={z}] {donnees_3.txt};
                         \addlegendentry{$r$}
		\end{axis} 
	\end{tikzpicture}
	\caption{\label{zero ss} Instability of the aggregate. Simulation of an aggregate $(p,q,r)$ of initial size $(2,4,3)$ with parameters $\kappa_1= \kappa_2=\kappa_3=\kappa_{-1} = 1$ and $\kappa_{-}=0.93$, implying $\bar\alpha>1$.}
\end{figure}
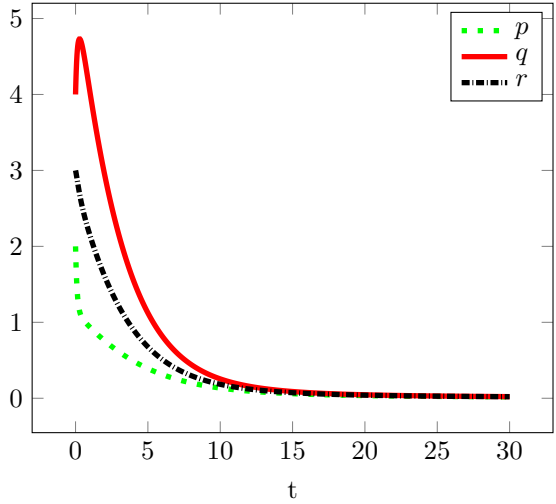

\begin{figure}[h!]
\centering
\begin{tikzpicture}

%code with title on two lines 
		%\begin{axis}[xlabel={t},align=center, title = {Evolution of an aggregate $(p,q,r)$ of initial size $(2,4,3)$ \\ with parameters $\kappa_1= \kappa_2=\kappa_3=\kappa_{-1} = 1$ and $\kappa_{-}=0.5$ }],
		\begin{axis}[xlabel={t}],
		
			\addplot [color=green,loosely dotted,line width=2.0pt] table[x={t}, y={x}] {donnees_1.txt};
			\addlegendentry{$p$}
			\addplot [color=red,solid,line width=2.0pt] table[x={t}, y={y}] {donnees_1.txt};
			\addlegendentry{$q$}
                         \addplot [color=black,densely dashdotted,line width=2.0pt] table[x={t}, y={z}] {donnees_1.txt};
                         \addlegendentry{$r$}
		\end{axis} 
	\end{tikzpicture}
	\caption{\label{growing polynomial} Growth of the aggregate. Simulation of an aggregate $(p,q,r)$ of initial size $(2,4,3)$ with parameters $\kappa_1= \kappa_2=\kappa_3=\kappa_{-1} = 1$ and $\kappa_{-}=0.2$, implying $\bar\alpha<0$.}
\end{figure}
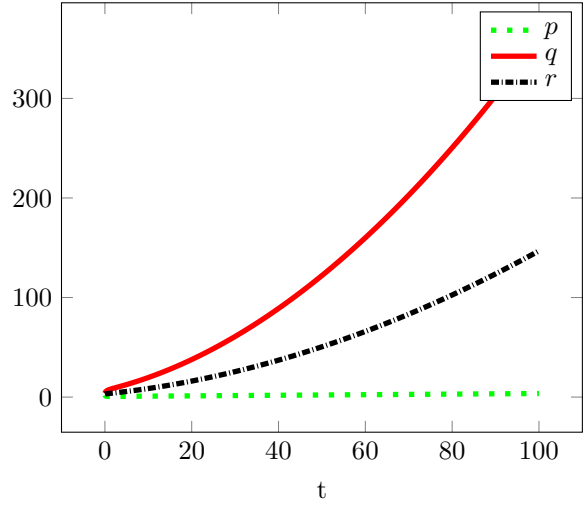

Closer inspection of the simulation results for growing aggregates (see Figure \ref{growing polynomial}) shows that
the growth is polynomial in time. This is verified by the following formal result.

\begin{thm}
With the notation of Theorem \ref{thm:steady-state}, if $\bar\alpha < 0$, then there exists a formal approximation of
a solution of \eqref{eq_1} of the form 
\be{ansatz}
   p(t)=p_1 t+ \textit{o}(t) \,,\qquad q(t)=q_2 t^{2}+ \textit{o}(t^{2}) \,,\qquad r(t) = r_2 t^{2}+ \textit{o}(t^{2}) \,,\qquad
   \mbox{as } t\to\infty \,,
\ee 
with
 \begin{equation}\label{def:poly}\begin{array}{l}
  p_1 = \frac{\kappa_- n}{\kappa_3 (2n\kappa_2 + \kappa_- n + 4\kappa_{-1})} \left( \kappa_1 \kappa_2 
     - \frac{\kappa_- n}{2(n-2)} \left( \kappa_1 + \kappa_{-1} + \frac{\kappa_- n(n-1)}{2(n-2)}\right)\right) >0\,, \\ 
  q_2 = \frac{n}{2}r_2 =  \frac{\kappa_3(n-2)(2n\kappa_2 + \kappa_- n + 4\kappa_{-1})}
      {\kappa_-(4\kappa_1 (n-2) + \kappa_- n^2)} p_1^2\,.
\end{array}\end{equation}
The approximation is (from a formal point of view) unique, including the choice of the exponents of $t$, among solutions
with polynomially or exponentially growing aggregate size $r$.
\end{thm}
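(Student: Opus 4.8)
The plan is to insert the polynomial \emph{Ansatz} \eqref{ansatz}, refined to $p=p_1t+p_0+o(1)$, $q=q_2t^2+q_1t+o(t)$ and $r=r_2t^2+r_1t+o(t)$, into \eqref{eq_1} and to match powers of $t$. The guiding structural observation is that in the growing regime the aggregate becomes tightly connected, i.e. $\alpha(q,r)\to0$. Since $\alpha=(nr-2q)/((n-2)r)$ by \eqref{def:alpha}, consistency of the $r$-equation (whose left-hand side is only $O(t)$) forbids $\kappa_-q\,\alpha$ from being $O(t^2)$, which forces the cancellation $nr_2-2q_2=0$, i.e. the claimed $q_2=\tfrac n2 r_2$. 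Consequently $nr-2q$, and with it the free-site variable $s:=nr-p-2q$, grows only \emph{linearly}, $s=s_1t+o(t)$, one order below $q$ and $r$. The coefficient $s_1=nr_1-2q_1-p_1$ thus packages the subleading data of $q,r$, and I would treat $(p_1,q_2,r_2,s_1)$ as the leading unknowns, working with $s$ (governed by \eqref{eq:s}) as the natural slow quantity rather than with $p,q,r$ directly.

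Next I would read off the leading balances. The $p$- and the $q$-equation of \eqref{eq_1} each carry an $O(t^2)$ part on their right-hand sides — from $-\kappa_3ps$ and $\kappa_-q$ in the former, from $\kappa_3ps$ and $-\kappa_-q$ in the latter — and their vanishing yields \emph{one and the same} relation $\kappa_3p_1s_1=\kappa_-q_2$. The $O(t)$ part of the $r$-equation, using $\kappa_-q\,\alpha=\kappa_-q(s+p)/((n-2)r)$ together with $q/r\to n/2$, gives $2r_2=\kappa_2p_1-\tfrac{\kappa_-n}{2(n-2)}(s_1+p_1)$. The danger here is that the remaining first-order balances involve the subleading coefficients $p_0,q_1,r_1$; to bypass them I would use the \emph{combination} obtained by adding the first two equations of \eqref{eq_1},
\[
   \dot p+\dot q=\kappa_1 s-\frac{\kappa_-(n-1)pq}{(n-2)r}-\kappa_{-1}p,
\]
in which the cross terms $\pm\kappa_3ps$ and the $\kappa_2p$ cancel exactly. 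Its $O(t)$ part is free of subleading unknowns and delivers $2q_2=\kappa_1 s_1-\tfrac{\kappa_-n(n-1)}{2(n-2)}p_1-\kappa_{-1}p_1$.

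These four relations — $q_2=\tfrac n2 r_2$, the bilinear $\kappa_3p_1s_1=\kappa_-q_2$, and the two linear relations from the $r$-equation and the summed equation — form a closed algebraic system for $p_1,q_2,r_2,s_1$. I would solve the two linear relations for $q_2$ and $s_1$ as multiples of $p_1$; substituting these into the bilinear relation cancels one factor of $p_1$ and pins down $p_1$ as a single rational expression in the rate constants, which after simplification is exactly \eqref{def:poly}, the bracketed factor being positive precisely when $\bar\alpha<0$ by the explicit equivalence \eqref{def:f2}. Back-substitution then yields $q_2=\tfrac n2 r_2>0$. For the \emph{uniqueness} of the exponents I would argue that no competing scaling survives: a positive limiting $\alpha$ would make $\kappa_-q\,\alpha$ of order $t^2$ and force $\dot r<0$, contradicting growth, while any faster-than-quadratic or exponential $r$ would break the simultaneous $O(t^2)$ cancellation imposed on the $p$- and $q$-equations.

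I expect the main obstacle to be the bookkeeping of this multiscale expansion so that the leading-order system genuinely closes — in particular recognising that $s$ is a full order slower than $q,r$, and identifying the summed equation $\dot p+\dot q$ as the combination that purges the subleading coefficients. The second delicate point is making the exponent-uniqueness statement rigorous at the formal level, since it requires ruling out every alternative polynomial or exponential growth rate of $r$ by checking that the corresponding leading balances are mutually inconsistent.
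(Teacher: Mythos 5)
Your derivation of the coefficient system is correct and is essentially the paper's own: the same ansatz augmented by $s(t)=s_1t+o(t)$, the same use of the combination $\dot p+\dot q$ (the second equation in \eqref{r_and_p+q}) to eliminate the terms $\kappa_3 ps$, $\kappa_- q$, $\kappa_2 p$ and thereby avoid subleading coefficients, the same four relations for $(p_1,s_1,q_2,r_2)$, and the same positivity argument via \eqref{def:f2}. The only organizational difference is how you obtain $2q_2=nr_2$: you get it by order-counting in the $r$-equation ($\kappa_- q\,\alpha$ cannot be $O(t^2)$ when $\dot r=O(t)$), whereas the paper pairs the leading balance of the $q$-equation, $\kappa_3p_1s_1=\kappa_-q_2$, with that of the $(q-r)$-equation \eqref{eq:q-r}, $\kappa_3p_1s_1=\kappa_-q_2\left(1-\alpha(q_2,r_2)\right)$, to conclude $\alpha(q_2,r_2)=0$. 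Both are valid, and yours is arguably the more direct route to that relation.

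Where your proposal falls short of the paper is the part you yourself flag as delicate: the uniqueness of the exponents. The paper does not posit the ansatz and then check competitors; it \emph{derives} the exponents, and the step your sketch is missing is the a priori scaling of $p$ and $s$. From admissibility one has $q(t)=O_s(r(t))$ (the paper's \eqref{q-equiv-r}), and from the requirement that the right-hand sides of the two equations in \eqref{r_and_p+q} be asymptotically nonnegative (because $q$ and $r$ grow), the first equation gives $s=O(p)$ and the second $p=O(s)$, hence $p,s=O_s(t^{\gamma-1})$ whenever $r=O_s(t^\gamma)$ (the paper's \eqref{s-equiv-p}). Only with this in hand does the balance in the $q$-equation, $\kappa_3 ps=O_s(q)$, force $2(\gamma-1)=\gamma$, i.e. $\gamma=2$; and in the exponential case $r,q,p,s=O_s(e^{\lambda t})$ the term $-\kappa_3 ps=O_s(e^{2\lambda t})$ in the $p$-equation cannot be balanced by any positive term and would drive $p$ negative, a contradiction. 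Your phrase that faster growth ``would break the simultaneous $O(t^2)$ cancellation'' gestures at exactly this, but without the scaling relation $p,s=O_s(t^{\gamma-1})$ it is not yet an argument, since one cannot even compare the orders of the competing terms. To complete the proof along your lines, prepend these two order-of-magnitude observations before matching coefficients; the rest of your algebra then closes as claimed.
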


\begin{proof}
Since $2r\le 2q \le nr$ holds for admissible states, when $r(t)$ tends to infinity, then also $q(t)$ tends to infinity at the same
rate, which we write with the {\em sharp order symbol} $O_s$ as
\be{q-equiv-r}
   q(t) = O_s(r(t)) \qquad\mbox{as } t\to\infty \,.
\ee
With $\alpha = \frac{s+p}{(n-2)r}$, we write the equations for $r$ and for $p+q$ as
\be{r_and_p+q}
   \dot r =  \kappa_2 p - (s+p) \frac{\kappa_- q}{(n-2)r} \,,\qquad    
   \dot p + \dot q = \kappa_1 s - p \left( \frac{\kappa_-(n-1)q}{(n-2)r} + \kappa_{-1}\right) \,.
\ee
Since the right hand sides have to be asymptotically nonnegative by the growth of $q$ and $r$, taking \eqref{q-equiv-r}
into account, the first equation implies $s(t) = O(p(t))$, and the second implies $p(t)=O(s(t))$, i.e.
\be{s-equiv-p}
   s(t) = O_s(p(t)) \qquad\mbox{as } t\to\infty \,.
\ee
If the growth were exponential, i.e. $r(t),q(t) = O_s(e^{\lambda t})$, $\lambda>0$, then \eqref{r_and_p+q} would imply
$p(t), s(t) = O_s(e^{\lambda t})$. Then the negative term $-\kappa_3 p(t)s(t) = O_s(e^{2\lambda t})$ in the first equation
in \eqref{eq_1}  could not be balanced by any of the positive terms, and would drive $p$ to negative values. This 
contradiction excludes exponential growth.

For polynomial growth, i.e. $r(t),q(t) = O_s(t^\gamma)$, \eqref{r_and_p+q} implies $p(t),s(t) = O_s(t^{\gamma-1})$.
In the equation for $q$ in \eqref{eq_1}, $\dot q$ and $p$ are small compared to $q$. Therefore it is necessary that
$s(t)p(t) = O_s(q(t))$, implying $2\gamma-2=\gamma$ and, thus, $\gamma=2$. This justifies the ansatz \eqref{ansatz}
with the addition $s(t) = s_1 t + o(t)$. Substitution into the differential equations and comparison of the leading-order terms
gives equations for the coefficients:
\begin{align*}
  \mbox{2nd equ. in \eqref{eq_1}:} \qquad & 0 = \kappa_3 p_1 s_1 - \kappa_- q_2 \,,\\
  \mbox{\eqref{eq:q-r}:} \qquad & 0 = \kappa_3 p_1 s_1 - \kappa_- q_2 \left(1 - \alpha(q_2,r_2)\right) \,,\\
  \mbox{1st equ. in \eqref{r_and_p+q}:} \qquad & 2r_2 =  \kappa_2 p_1 - (s_1+p_1) \frac{\kappa_- q_2}{(n-2)r_2} \,, \\
  \mbox{2nd equ. in \eqref{r_and_p+q}:} \qquad & 2q_2 = \kappa_1 s_1 - p_1 \left( \frac{\kappa_-(n-1)q_2}{(n-2)r_2} + \kappa_{-1}\right) \,,
\end{align*}
This system can be solved explicitly by first noting that the first two equations imply $\alpha(q_2,r_2)=0$ and, thus,
$2q_2=nr_2$. Using this in the third and fourth equation gives a linear relation between $p_1$ and $s_1$. This again can 
be used in the fourth equation to write $q_2$ as a linear function of $s_1$. The division of the first equation by $s_1$ then
gives the formula for $p_1$ in \eqref{def:poly}. The positivity of $p_1$ is a consequence of \eqref{def:f2}.
\end{proof}

For all the results so far the positivity of the rate constant $\kappa_-$ for breaking cross-links has been essential. 
Therefore it seems interesting to consider the special case $\kappa_- = 0$ separately. It turns out that the dynamics
is much simpler. The aggregate size always grows linearly with time.

\begin{thm} Let $3\le n\in\mathbb{N}$, $\kappa_1,\kappa_2,\kappa_3,\kappa_{-1}>0$, and $\kappa_{-} = 0$. 
Let $(p_0,q_0,r_0) \in (0,\infty)^3$ satisfy \eqref{cont-cond}. Then the solution of \eqref{eq_1} satisfies 
$$
  \lim_{t\to\infty}p(t) = p_{\infty}:= \frac{(n-2) \kappa_1 \kappa_2}{\kappa_3 ( \kappa_2 (n-2) + \kappa_{-1})} \,,\qquad
  \lim_{t\to\infty}s(t) = s_{\infty}:=  \frac{(n-2) \kappa_2}{2 \kappa_3} \,,
$$ $$
   q(t) = p_\infty (\kappa_2 + \kappa_3 s_\infty)t + o(t) \,,\qquad r(t) = \kappa_2 p_\infty t + o(t) \,,
     \qquad\mbox{as } t\to\infty \,.
$$
\end{thm}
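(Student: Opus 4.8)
The plan is to exploit the drastic simplification that occurs when $\kappa_-=0$: the equation for $r$ in \eqref{eq_1} reduces to $\dot r=\kappa_2 p$, and the $\alpha$-terms disappear entirely, so the system becomes smooth on all of $(0,\infty)^3$. First I would observe that the variables $p$ and $s=nr-p-2q$ decouple from the growing variable $r$. Indeed, with $\kappa_-=0$ the equation \eqref{eq:s} for $s$ loses its $\kappa_-$ contributions and becomes
\begin{align*}
  \dot s = (n-1)\kappa_2 p + \kappa_{-1}p - s(\kappa_3 p + \kappa_1)\,,
\end{align*}
while the equation for $p$ in \eqref{eq_1} reduces to $\dot p=(\kappa_1-\kappa_3 p)s-(\kappa_2+\kappa_{-1})p$. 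Thus $(p,s)$ satisfies a closed planar autonomous system, and the whole long-time behaviour is governed by this two-dimensional subsystem.

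The core of the argument is therefore to show that $(p,s)$ converges to the claimed equilibrium $(p_\infty,s_\infty)$. First I would verify by direct substitution that $(p_\infty,s_\infty)$ solves the two algebraic equations obtained by setting the right-hand sides of the planar system to zero; solving $0=(\kappa_1-\kappa_3 p)s-(\kappa_2+\kappa_{-1})p$ and $0=((n-1)\kappa_2+\kappa_{-1})p-s(\kappa_3 p+\kappa_1)$ simultaneously yields exactly the stated formulas. Next I would establish global convergence to this point within the invariant region $\{p>0,\,s>0\}$ (invariance following from Theorem~\ref{thm 1}). The natural tool is planar phase-plane analysis: I would compute the Jacobian at $(p_\infty,s_\infty)$ and check that its trace is negative and its determinant positive, giving local asymptotic stability, and then rule out periodic orbits—most cleanly via a Dulac or Bendixson criterion. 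The divergence of the planar vector field is $-\kappa_3 s-(\kappa_2+\kappa_{-1})-(\kappa_3 p+\kappa_1)$, which is strictly negative throughout the positive quadrant, so Bendixson's criterion immediately excludes closed orbits; combined with the boundedness supplied by \eqref{thm1:1} and the positivity of $(p,s)$, Poincaré–Bendixson then forces convergence to $(p_\infty,s_\infty)$.

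Once $p(t)\to p_\infty$ is established, the asymptotics of $r$ and $q$ follow by integration. From $\dot r=\kappa_2 p$ and $p(t)\to p_\infty$ one gets $r(t)=\kappa_2 p_\infty t+o(t)$ directly. For $q$, I would use $q=\tfrac12(nr-p-s)$: since $p$ and $s$ converge to finite limits, the leading behaviour of $q$ is $\tfrac{n}{2}\kappa_2 p_\infty t$, and a short computation rewrites the prefactor as $p_\infty(\kappa_2+\kappa_3 s_\infty)$ using $\tfrac{n}{2}\kappa_2=\kappa_2+\kappa_3 s_\infty$, which holds by the definition of $s_\infty$. I expect the main obstacle to be making the convergence $p(t)\to p_\infty$ fully rigorous rather than merely local: while Bendixson rules out cycles, one must confirm that no trajectory escapes to the boundary $p=0$ or $s=0$ or to infinity. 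Here the uniform bound \eqref{thm1:1} controls growth, and the sign structure of the vector field on the coordinate axes (pointing inward, since $\dot p>0$ when $p=0,s>0$ and $\dot s>0$ when $s=0,p>0$) confines trajectories to the open quadrant; assembling these facts into a clean Poincaré–Bendixson conclusion is the delicate step.
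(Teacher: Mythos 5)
Your reduction to the closed planar $(p,s)$ system and the final integration step coincide with the paper's proof, but your convergence mechanism is genuinely different: the paper builds an explicit Lyapunov function out of nested rectangles $R_a=\left[p_\infty/a,\,ap_\infty\right]\times\left[s_\infty/a,\,as_\infty\right]$ and checks that the flow crosses each $\partial R_a$ strictly inwards, while you argue via local linearization, Bendixson's negative-divergence criterion, and Poincar\'e--Bendixson. Your route can be made to work, but as written it has two genuine gaps. (i) Precompactness: Poincar\'e--Bendixson requires the forward orbit to lie in a compact set, and \eqref{thm1:1} does not supply this --- it is an exponentially growing bound in $t$, not a uniform one, so citing it as ``the uniform bound'' is incorrect. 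The fix is elementary but must be stated: if $p\ge\kappa_1/\kappa_3$ then $(\kappa_1-\kappa_3 p)s\le 0$, hence $\dot p\le -(\kappa_2+\kappa_{-1})p<0$, so $p(t)\le\max\{p_0,\kappa_1/\kappa_3\}=:P$; then $\dot s\le((n-1)\kappa_2+\kappa_{-1})P-\kappa_1 s$ bounds $s$ by comparison. (ii) The origin: $(0,0)$ is a second equilibrium of the planar system, sitting on the boundary of your quadrant, and it is a hyperbolic saddle (the Jacobian there has determinant $-(n-2)\kappa_1\kappa_2<0$). Excluding periodic orbits therefore only tells you that the $\omega$-limit set contains \emph{some} equilibrium; your observation that the field points inwards on the axes keeps the orbit in the open quadrant at all finite times, but it neither prevents the orbit from accumulating at the origin as $t\to\infty$ nor excludes a graphic through the origin. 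A clean repair: along solutions
$$
  \frac{d}{dt}(p+s)=p\bigl((n-2)\kappa_2-2\kappa_3 s\bigr)>0 \qquad\mbox{whenever } p>0,\ s<s_\infty \,,
$$
so no orbit in the closed quadrant can converge to the origin; hence the $\omega$-limit set must contain $(p_\infty,s_\infty)$, and your trace/determinant computation (local asymptotic stability) then forces convergence to it.

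A further caution: your claim that direct substitution ``yields exactly the stated formulas'' is not accurate. Solving the two equilibrium equations of the planar system gives $s_\infty=(n-2)\kappa_2/(2\kappa_3)$ as stated, but it gives
$$
  p_\infty=\frac{(n-2)\kappa_1\kappa_2}{\kappa_3\left(n\kappa_2+2\kappa_{-1}\right)} \,,
$$
whose denominator differs from the printed $\kappa_3\left(\kappa_2(n-2)+\kappa_{-1}\right)$; the printed value does not satisfy either equilibrium equation, so the theorem statement appears to carry a typo. An honestly executed verification would have surfaced this discrepancy rather than confirmed the formula, so this step of your proposal, as claimed, fails.
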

\begin{proof}
For $\kappa_{-} = 0$ the right hand sides in \eqref{eq_1} depend only on $p$ and $s=nr-2q-p$, meaning that these two
variables solve a closed system: 
\begin{align*}
\dot p &= \kappa_1 s  - (\kappa_2 + \kappa_{-1} + \kappa_3 s)p \,,\\
\dot s &= ((n-1)\kappa_2 + \kappa_{-1})p  - (\kappa_1 + \kappa_3 p)s  \,.
\end{align*}
The unique nontrivial steady state $(p_{\infty},s_{\infty})$ can be computed explicitly.
We prove that it is globally attracting by constructing a Lyapunov functional. Let $a\ge 1$ and 
$$
   R_a := \left[ \frac{p_\infty}{a}, ap_\infty\right] \times \left[ \frac{s_\infty}{a}, as_\infty\right]\,.
$$ 
For each point $(p,s)\in (0,\infty)^2$ there is a unique value of $a\ge 1$ such that $(p,s)\in \partial R_a$.
Therefore the Lyapunov function
 \begin{align*}
     L(p,s) := a-1 \qquad\mbox{for } (p,s) \in \partial R_a \,,
 \end{align*}
 is well defined and definite in the sense $L(p,s)\ge 0$ with equality only for $(p,s)=(p_\infty,s_\infty)$. It remains to prove
 that the flow on $\partial R_a$ is strictly inwards. For example, for the left boundary part,
 $$
     \dot p \bigm|_{(p,s)\in \{p_\infty/a\}\times [s_\infty/a,as_\infty]} > 
     \left(\kappa_1 - \kappa_3 \frac{p_\infty}{a}\right)\frac{s_\infty}{a} - (\kappa_2 + \kappa_{-1})\frac{p_\infty}{a}
     = \frac{\kappa_3 p_\infty s_\infty (a-1)}{a^2} >0 \,,
 $$
 where for the first inequality it has been used that $p_\infty < \kappa_1/\kappa_3$, and the equality follows from the fact
 that $\dot p$ vanishes at the steady state. Similarly it can be shown that $\dot p<0$ on the right boundary part, $\dot s>0$
 on the lower  boundary part, and $\dot s < 0$ on the upper boundary part.
 
 The linear growth of $q$ and $r$ follows from
$$
\lim_{t\to\infty} \dot q(t) = \kappa_2 p_\infty + \kappa_3 p_\infty s_\infty  \,,\qquad
\lim_{t\to\infty}\dot r(t) =  \kappa_2 p_\infty \,.
$$
\end{proof}

This result shows that the breakage of cross-links has somewhat contradictory effects, depending on the parameter regime: 
It can speed-up the aggregation dynamics, producing a quadratic rather than linear growth of the aggregate size 
(Case 3 of Conjecture \ref{conj:cases}). This is linked to the fact that it allows the aggregates to rearrange in a more compact 
way. On the other hand, it may slow down the dynamics, such that the aggregate only reaches a finite size (Case 1) or even 
disintegrates completely (Case 2).

\section{Comparison with experimental data --- Discussion}

\paragraph{Comparison with experimental data:} There are only limited options for a serious comparison of the theoretical
results with experimental data. We shall use the data shown in Figure \ref{fig 2}, which have been published in
\cite{martens2}. It provides observed numbers of aggregates in dependence of ubiquitin for a fixed concentration of p62.
Our results do not permit a direct comparison with this curve, which would require modelling of the process of nucleation 
of aggregates. However, the data provide at least some information about concentration levels of ubiquitin and p62, such
that stable aggregates exist.

\begin{center}
\begin{figure}[h!]
\centering
    \includegraphics[width=6cm]{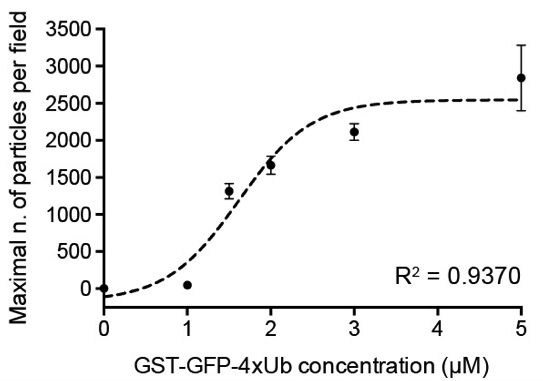}
  \caption{Number of  aggregates in terms of $[Ubi]$ (or more precisely $(4\times Ubi-GST-GFP)_2$) 
  at fixed [p62]$=2\mu M$  \cite{martens2}. 
  Average and SD among three independent replicates are shown. The dashed line represents a fitted
  sigmoidal (more precisely, logistic) function, centered around $[Ubi] = 1.6\mu M$. Note that here p62 monomers 
  are counted. Under the assumption that p62 only occurs in oligomers of size $n$ we have [p62]=$n$[p62$_n$].
  The regression coefficient R$^2$ measures the quality of the fit. }
  \label{fig 2}
\end{figure}
\end{center}

For meaningful quantitative comparisons with these scarce data we need to reduce the number of parameters in our model. 
As a first step, we fix
the value $n=5$ of the size of p62 oligomers, following \cite{martens2} where values between $5$ and $6$ for GFP-p62 have
been found (although we note that in \cite{10.7554/eLife.08941} an average of about $n=24$ has been reported for 
mCherry-p62 in vitro). This implies that the experiment corresponds to an oligomer concentration of 
$[{\rm p62}_5] = [p62]/5 = 0.4\mu M$. 

Concerning the rate constants, we make the assumption that the binding and, respectively, the unbinding rate constants are
equal, i.e. $\kappa_1' = \kappa_2' = \kappa_3'$ and $\kappa_{-1}=\kappa_-$. This will allow to express all our results in terms
of one {\em dissociation constant} $K_d := \kappa_{-1}/\kappa_1'$.

From Figure \ref{fig 2}  we conclude that for an oligomer concentration of $[{\rm p62}_5] = 0.4\mu M$ the growth of stable aggregates requires a cross-linker concentration $[Ubi]$ roughly between $0.6\mu M$ and $2.6\mu M$ ($(1.6\pm 1)\mu M$).
According to the results of the preceding section, these values should correspond to situations
with either $\bar\alpha=0$ or $\bar\alpha=1$, depending on the question, if the equilibrium aggregate sizes of
Case 1 in Conjecture \ref{conj:cases} are large enough to be detected in the experiment, or if we need to be in Case 3
of growing aggregates. Therefore, with the above assumptions, with $\kappa_1 = \kappa_1' [Ubi]$, 
$\kappa_2 = \kappa_2' [{\rm p62}_5]$, and with \eqref{def:f1}, \eqref{def:f2}, we obtain for $\bar\alpha=1$:
\be{eq:Kd}
   [{\rm p62}_n]\,[Ubi] = \frac{K_d}{n-2} \left( [Ubi] + \frac{(2n-3)K_d}{n-2}\right) \,,
\ee
and for $\bar\alpha = 0$:
\be{eq:Kd1}
   [{\rm p62}_n]\,[Ubi] = \frac{n K_d}{2(n-2)} \left( [Ubi] + \frac{(n^2 + n - 4)K_d}{2(n-2)}\right) \,.
\ee
Solving these equations for $K_d$ with $n=5$, $[{\rm p62}_n] = 0.4\mu M$, and with $[Ubi]$ between $0.6\mu M$ and 
$2.6\mu M$, gives estimates for $K_d$ between $0.44\mu M$ and $0.73\mu M$ for $\bar\alpha=1$, and between 
$0.20\mu M$ and $0.31\mu M$ for $\bar\alpha=0$. So we claim that at least the order of magnitude is significant. It differs by three orders of magnitude from published
data on the reaction between ubiquitin and the UBA domain of p62 ($K_d \approx 540\mu M$ \cite{PMID:19931284}).
This should not be so surprising, since in the context of growing aggregates the reactions can be strongly influenced by avidity effects.

\paragraph{Discussion:}
We return to Conjecture \ref{conj:cases}, where the long-time behaviour is described in terms of the value of the parameter
$\bar\alpha$ defined in \eqref{def:baralpha}. With the simplifying assumptions on the reaction rate constants from above,
the statements of the conjecture are depicted in Figure \ref{fig 1} for the fixed values $n=5$ and $K_d = 0.5\mu M$ 
(motivated by the estimates above) in a bifurcation diagram in terms of the concentrations $[Ubi]$ and $[{\rm p62}_n]$.
Note the unsymmetry in the dependence on the two quantities: The critical values for $[Ubi]$ tend to zero as 
$[{\rm p62}_n]$ tends to infinity, whereas the critical values for $[{\rm p62}_n]$ tend to the positive values 
$\frac{K_d}{n-2}$ for $\bar\alpha=1$ and $\frac{n K_d}{2(n-2)}$ for $\bar\alpha=0$, as $[Ubi]$ tends to infinity.

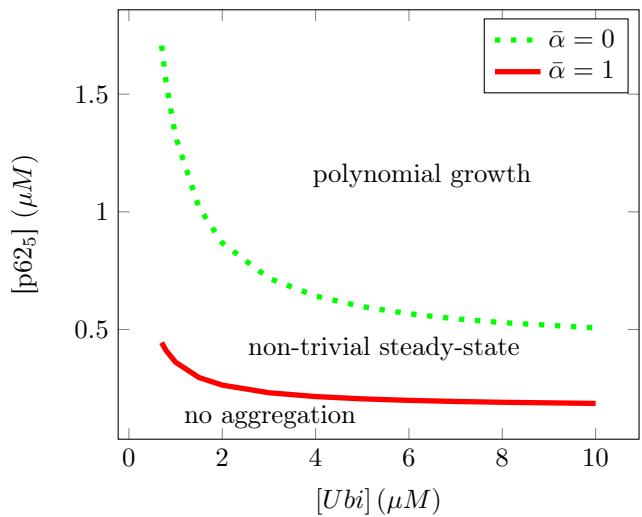
\begin{figure}[h!]
\centering
\begin{tikzpicture}
		\begin{axis}[
		 xlabel = {$[Ubi]\, (\mu M)$},
  ylabel = {[p62$_5$]  ($\mu M$)}]
			\addplot [color=green,loosely dotted,line width=2.0pt] table[x={x}, y={y}] {bif-diagram-data.txt};
			\addlegendentry{$\bar \alpha =0$}
			\addplot [color=red,solid,line width=2.0pt] table[x={x}, y={z}] {bif-diagram-data.txt};
			\addlegendentry{$\bar \alpha=1$}                      
\end{axis}
\draw (4,3.5) node {polynomial growth};
\draw (3.5,1.2) node {non-trivial steady-state};
\draw (2,0.3) node {no aggregation};
	\end{tikzpicture}

	\caption{Bifurcation diagram corresponding to Conjecture \ref{conj:cases}   for $n=5$, $K_d = 0.5 \mu M$.}
	\label{fig 1}
\end{figure}

There is a significant uncertainty concerning the oligomer size $n$, which has so far been assumed to be 5, according to
observations in \cite{martens2}. Actually, a distribution of oligomer sizes should be expected in the experiments of 
Figure \ref{fig 2} with the occurrence of much larger oligomers. For this reason the computation of $K_d$ from \eqref{eq:Kd} 
has been repeated for a range of values of $n$ between $n=3$ and $n=100$. The results are depicted in Figure 
\ref{frontier_1}, which shows that the predicted values of $K_d$ might be larger by up to an order of magnitude compared
to the case $n=5$, but still small
compared to \cite{PMID:19931284}, if larger oligomer sizes are considered and $\bar\alpha=1$ is relevant. 
The asymptotic behaviour for large oligomer sizes is easily seen to be $K_d = O(n^{1/2})$.
On the other hand,
if $\bar\alpha=0$ is relevant, the value of $K_d$ becomes smaller by up to an order of magnitude for large oligomers
with the asymptotic behaviour $K_d= O(n^{-1/2})$.

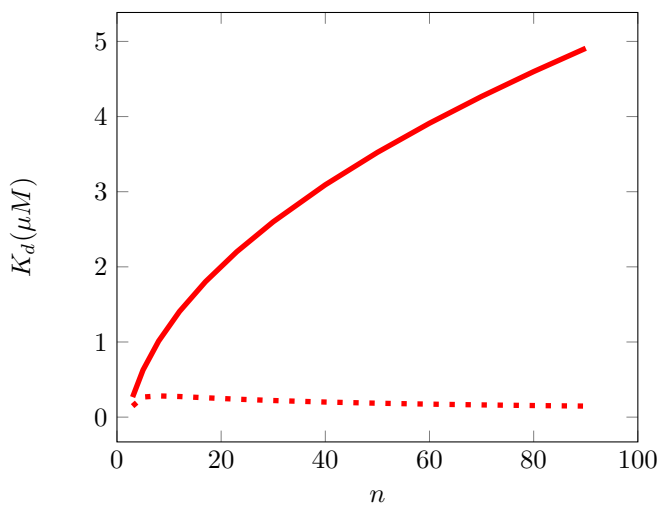
\begin{figure}[h!]
\centering
\begin{tikzpicture}
\begin{axis}[
  xlabel = {$n$},
  ylabel = {$K_d (\mu M)$},
  xmin=0, xmax=100,]
\addplot [color=red,loosely dotted,line width=2.0pt] table[x={x}, y={y}] {Kdn-data.txt};
\addplot [color=red,solid,line width=2.0pt] table[x={x}, y={z}] {Kdn-data.txt};
\end{axis}
\end{tikzpicture} 
	\caption{ The dissociation constant $K_d$ determined from \eqref{eq:Kd} (solid line) and \eqref{eq:Kd1} (dashed line), depending on the p62 oligomer size $n$. 
	Ubiquitin and p62 oligomer concentrations from Figure \ref{fig 2} at the onset of aggregation: $[Ubi]=1.6 \mu M$, 
	$[{\rm p62}_n]=0.4 \mu M$. \label{frontier_1} }
\end{figure}

\section{Conclusion}
In this article, we have proposed an ODE model for the growth and decay of aggregates of p62 oligomers cross-linked by 
ubiquitin chains. Under the assumption of unlimited supply of free oligomers and cross-linkers we found three possible asymptotic regimes: complete degradation of aggregates, convergence towards a finite aggregate size, and unlimited growth (quadratic in time) of the aggregate size. In the latter case, growing aggregates are asymptotically tightly packed with the 
maximum number of cross-links. These statements are supported by a mixture of explicit steady state computations, 
formal asymptotic analysis, and numerical simulations. The three regimes, which can be separated explicitly in terms
of the reaction constants, have been illustrated by the simulation results. Rigorous proofs of the long-time behaviour in the
three regimes are the subject of ongoing investigations.

A comparison of the theoretical results with data from \cite{martens2} has provided an estimate for the dissociation constant
of the elementary reaction between ubiquitin and the UBA domain of p62 in the context of growing aggregates.

There are several possible extensions of this work. A limitation of the original discrete model is that the description of
aggregates by triplets $(i,j,k)$ is very incomplete. Typically, very different configurations are described by the same triplet.
For example, we could imagine very homogeneous or very heterogeneous aggregates, \emph{i.e.} fully packed in certain regions and very loose in others. Reaction rates will strongly depend on the configuration, including information about the 
geometry of the aggregate. In principle one can imagine an attempt to overcome these difficulties based on a random graph
model \cite{frieze}, but the resulting model describing probability distributions on the sets of all possible aggregate shapes
would be prohibitively complex. An intermediate solution would be a more serious approach to finding formulas for quantities
like the probability $\alpha$ of losing an oligomer, when a cross-link breaks, based on typical probability distributions.

The model \eqref{eq_1} describes an intermediate stage of the aggregation process. On the one hand, the large aggregate
assumption means that we are dealing with the growth of already developed aggregates, neglecting the nucleation process,
which is important for the number of established aggregates. A model of the nucleation process would be based on the 
discrete representation and it would have to be stochastic. On the other hand, we neglect two effects important for a later
stage of the process. The first and obvious one is the limited availability of free p62 oligomers and ubiquitin cross-linkers.
It would be rather straightforward to incorporate this into the model, however at the expense of increased complexity.
It would also eliminate the dichotomy between the Cases 1 and 3 of Conjecture \ref{conj:cases} since unbounded growth
would be impossible. For relatively large initial concentrations of free particles, one could imagine a two-time-scale behaviour
with an initial quadratic growth and saturation on a longer time scale.
The other effect, which is neglected here but definitely present in experiments, is coagulation of aggregates. This is the subject
of ongoing work, based on the PDE model \eqref{PDE} derived in the appendix and enriched by an account of the
coagulation process.

\begin{appendix}
\section{Large aggregate limit}
\label{app:model}
We denote by $c_{i,j,k}(t)$ the probability of the aggregate to be in the state $(i,j,k)$ at time $t$. Its evolution will be 
determined by a jump process model of the reactions with the rates given in \eqref{def:r1}, \eqref{def:r2}, \eqref{def:r3}, 
\eqref{def:r-1}, \eqref{def:alpha-discr}, \eqref{def:r-2}, \eqref{def:ell}, and \eqref{def:r-3}. 

For this purpose the relation between pre-reaction state $(i',j',k')$ and post-reaction state $(i,j,k)$ needs to be inverted. 
This is easy except for Reaction 5, where we have $j=j'-1$, $k=k'-1$, and, with \eqref{def:ell},
\be{pre2post}
   i = i' + 1 - \ell_{i',j',k'} = i' + 1 - \left\lfloor \frac{(n-1)i'}{nk'-2j'} \right\rceil\,.
\ee
The inversion is not possible in general. Occasionally, $\ell_{i',j',k'}$ will increase by one, when $i'$ is increased by one,
implying that $i$ might take the same value for two consecutive values of $i'$. Even worse:
For the extreme case $nk'-2j'=n-1$, where after the loss of a p62 oligomer all binding sites 
are busy with two-hand bound $Ubi$ except the one remaining after breaking the connection, i.e. $nk-2j=1=i$. This
state is independent from the number $i'\in\{0,\ldots,n-1\}$ of one-hand bound $Ubi$ getting lost with the oligomer.
Therefore we introduce
$$
  I_{i,j,k} = \left\{ i':\, i = i' + 1 - \ell_{i',j+1,k+1}\right\}
$$
The equation for the probability distribution reads
\begin{eqnarray}
\frac{dc_{i,j,k}}{dt} &=& (r_1c)_{i-1,j,k} - (r_1c)_{i,j,k} + (r_2c)_{i+1,j-1,k-1} - (r_2c)_{i,j,k} + (r_3c)_{i+1,j-1,k} - (r_3c)_{i,j,k} 
   \nonumber\\
 && + (r_{-1}c)_{i+1,j,k} - (r_{-1}c)_{i,j,k} +  \sum_{i'\in I_{i,j,k}} (r_{-2}c)_{i',j+1,k+1} - (r_{-2}c)_{i,j,k} \nonumber\\
  && + (r_{-3}c)_{i-1,j+1,k} - (r_{-3}c)_{i,j,k} \,. \label{eq}
\end{eqnarray} 
We introduce a typical value $k_0$ for the number $k$ of oligomers in the aggregate and use it also as a reference value
for $i$ and $j$, leading by the definition  \eqref{def:pqr} to the scaled triplet $(p,q,r)$. The latter lives on a grid with spacing
$\Delta p = \Delta q = \Delta r := 1/k_0$ and, thus, becomes a continuous variable in the large aggregate limit $k_0\to\infty$. Therefore we postulate
the existence of a probability density $P(p,q,r,t)$ such that
$$
   c_{i,j,k}(t) \approx k_0^3\, P\left(\frac{i}{k_0}, \frac{j}{k_0}, \frac{k}{k_0}, t\right) \,.
$$
Division of \eqref{eq} by $k_0^3$ and the limit $k_0\to\infty$ ($\Delta p = \Delta q = \Delta r \to 0$) will lead to an equation 
for $P$. We deal with the six differences on the right hand side of \eqref{eq}, corresponding to the six reactions, separately. \\
{\em Reaction~1:}
\begin{align*}
   & k_0^{-3} \left[ (r_1c)_{i-1,j,k} - (r_1c)_{i,j,k}\right] \\
   & \approx \frac{1}{\Delta p} \left[ \kappa_1(nr-p+\Delta p - 2q)P(p-\Delta p,q,r,t) - \kappa_1 (nr-p-2q)P(p,q,r,t)\right] \\
   & \to -\partial_p (\kappa_1(nr-p-2q)P) \,.
\end{align*}
{\em Reaction 2:}
\begin{align*}
   & k_0^{-3} \left[ (r_2c)_{i+1,j-1,k-1} - (r_2c)_{i,j,k} \right] \\
   & \approx \frac{1}{\Delta p} \left[ \kappa_2(p+\Delta p)P(p+\Delta p,q-\Delta q,r-\Delta r,t) 
          - \kappa_2 p P(p,q-\Delta q,r-\Delta r,t)\right] \\
   & \quad + \frac{1}{\Delta q} \left[ \kappa_2p P(p,q-\Delta q,r-\Delta r,t) - \kappa_2 p P(p,q,r-\Delta r,t)\right] \\
   & \quad + \frac{1}{\Delta r} \left[ \kappa_2p P(p,q,r-\Delta r,t) - \kappa_2 p P(p,q,r,t)\right] \\
   & \to \partial_p (\kappa_2 pP) - \partial_q (\kappa_2 pP) - \partial_r (\kappa_2 pP)\,.
\end{align*}
{\em Reaction 3:} Since this is a second-order reaction, it would dominate the dynamics for large $k_0$, if the reaction 
constant were of the same order of magnitude as the others. In order to avoid this, we set 
$\kappa_3' = \kappa_3/k_0$ and keep $\kappa_3$ fixed as $k_0\to\infty$. 
\begin{align*}
   & k_0^{-3} \left[ (r_3c)_{i+1,j-1,k} - (r_3c)_{i,j,k} \right] \\
   & \approx \frac{1}{\Delta p} \bigl[ \kappa_3(p+\Delta p)(nr-p-\Delta p-2q+2\Delta q)P(p+\Delta p,q-\Delta q,r,t) \\
   &\qquad\quad    - \kappa_3 p(nr-p-2q+2\Delta q) P(p,q-\Delta q,r,t)\bigr] \\
   & \quad + \frac{1}{\Delta q} \left[ \kappa_3 p(nr-p-2q+2\Delta q) P(p,q-\Delta q,r,t) - \kappa_3 p(nr-p-2q) P(p,q,r,t)\right] \\
   & \to \partial_p (\kappa_3 p(nr-p-2q)P) - \partial_q (\kappa_3 p(nr-p-2q)P) \,.
\end{align*}
{\em Reaction 4:}
\begin{align*}
   & k_0^{-3} \left[ (r_{-1}c)_{i+1,j,k} - (r_{-1}c)_{i,j,k} \right] \\
   & \approx \frac{1}{\Delta p} \left[ \kappa_{-1}(p+\Delta p)P(p+\Delta p,q,r,t) 
          - \kappa_{-1} p P(p,q,r,t)\right] \\
   & \to \partial_p (\kappa_{-1} pP) \,.
\end{align*}
{\em Reaction 5:} As preparatory steps, we compute
$$
  \ell_{i',j+1,k+1} = \left\lfloor \frac{(n-1)i'}{nk-2j+n-2}\right\rceil = \left\lfloor \frac{(n-1)p'}{nr-2q+(n-2)\Delta p}\right\rceil\,.
$$
As a function of $p'$, this is piecewise constant and equal to $\left\lfloor \frac{(n-1)p'}{nr-2q}\right\rceil$ with jumps 
(for small $\Delta p$) close to the set $\frac{nr-2q}{n-1}\left(\frac{1}{2} + \mathbb{N}_0\right)$. Away from these points
the map \eqref{pre2post} from pre- to post-reaction states is invertible with 
$$
   p' =  p + \Delta p \left(\ell(p,q,r) - 1\right) \,,\qquad \ell(p,q,r) := \left\lfloor \frac{(n-1)p}{nr-2q}\right\rceil \,.
$$
Note that $p'$ has been replaced by $p$ in the argument of $\ell$ since $p-p'=O(\Delta p)$. At all these generic points the
sum in \eqref{eq} has only one term. We shall also need
$$
  \alpha_{j,k} \to \frac{nr-2q}{(n-2)r} =: \alpha(q,r) \,.
$$
Thus,
\begin{align*}
   & k_0^{-3} \left[ \mathbbm{1}_{i\ge 1}(r_{-2}c)_{I_{i,j,k},j+1,k+1} - (r_{-2}c)_{i,j,k} \right] \\
   & \approx \frac{1}{\Delta p} \bigl[ \kappa_- \alpha(q+\Delta q,r+\Delta r)(q+\Delta q)
     P(p+\Delta p(\ell-1),q+\Delta q,r+\Delta ,t) \\
    & \qquad\quad - \kappa_- \alpha(q+\Delta q,r+\Delta r)(q+\Delta q) P(p,q+\Delta q,r+\Delta r,t)\bigr] \\
    & \approx \frac{1}{\Delta q} \bigl[ \kappa_- \alpha(q+\Delta q,r+\Delta r)(q+\Delta q) P(p,q+\Delta q,r+\Delta ,t) \\
    & \qquad\quad - \kappa_- \alpha(q,r+\Delta r)q P(p,q,r+\Delta r,t)\bigr] \\  
    & \approx \frac{1}{\Delta r} \bigl[ \kappa_- \alpha(q,r+\Delta r)q P(p,q,r+\Delta ,t) \\
    & \qquad\quad - \kappa_- \alpha(q,r)q P(p,q,r,t)\bigr] \\  
   & \to \partial_p (\kappa_- \alpha (\ell-1)q P) + \partial_q (\kappa_- \alpha q P) + \partial_r (\kappa_- \alpha q P)\,.
\end{align*}
Note that the factor $\ell-1$ has been written inside the derivative since $\ell$ is constant away from finitely many
critical points. We replace a detailed analysis at these points by the simple argument that the equation for $P$ has to be 
in conservation form to preserve the total probability. Finally we introduce a simplification by dropping the rounding 
operation in $\ell$.\\
{\em Reaction 6:}
\begin{align*}
   & k_0^{-3} \left[ (r_{-3}c)_{i+1,j-1,k} - (r_{-3}c)_{i,j,k} \right] \\
   & \approx \frac{1}{\Delta p} \bigl[ \kappa_- (1-\alpha(q+\Delta q,r))(q+\Delta q)P(p-\Delta p,q+\Delta q,r,t) \\
   &\qquad\quad    - \kappa_- (1-\alpha(q+\Delta q,r))(q+\Delta q) P(p,q+\Delta q,r,t)\bigr] \\
   & \quad + \frac{1}{\Delta q} \left[ \kappa_- (1-\alpha(q+\Delta q,r))(q+\Delta q) P(p,q+\Delta q,r,t)
      - \kappa_- (1-\alpha(q,r))q P(p,q,r,t)\right] \\
   & \to -\partial_p (\kappa_- (1-\alpha)qP) + \partial_q (\kappa_- (1-\alpha)qP) \,.
\end{align*}
Collecting our results, the limiting equation for the evolution of $P$ reads
\begin{eqnarray}
  &&\partial_t P + \partial_p \left( \left((\kappa_1 - \kappa_3 p)(nr-p-2q) - (\kappa_2 + \kappa_{-1})p 
  + \kappa_- q \left(1 - \frac{(n-1)p}{(n-2)r}\right) \right)P\right) \nonumber\\
  &&+ \partial_q \left( (\kappa_2 p+\kappa_3 p(nr-p-2q) - \kappa_- q)P\right)
   + \partial_r \left( (\kappa_2 p - \kappa_- \alpha q)P\right) = 0 \,.\label{PDE}
\end{eqnarray}
For deterministic initial conditions of the form $P(p,q,r,0) = \delta(p-p_0)\delta(q-q_0)\delta(r-r_0)$ the state remains
deterministic: $P(p,q,r,t) = \delta(p-p(t))\delta(q-q(t))\delta(r-r(t))$, where $(p(t),q(t),r(t))$ solves the initial value problem
\eqref{eq_1}.

\end{appendix}

\bibliography{mybib}{}
\bibliographystyle{plain}

\end{document}